\newcommand\blfootnote[1]{%
  \begingroup
  \renewcommand\thefootnote{}\footnote{#1}%
  \addtocounter{footnote}{-1}%
  \endgroup
}
\newtheorem{theorem}{Theorem}
\newtheorem{remark}{Remark}
\newsavebox\myboxA
\newsavebox\myboxB
\newlength\mylenA
\newcommand*\xoverline[2][0.75]{%
    \sbox{\myboxA}{$\m@th#2$}%
    \setbox\myboxB\null
    \ht\myboxB=\ht\myboxA%
    \dp\myboxB=\dp\myboxA%
    \wd\myboxB=#1\wd\myboxA
    \sbox\myboxB{$\m@th\overline{\copy\myboxB}$}
    \setlength\mylenA{\the\wd\myboxA}
    \addtolength\mylenA{-\the\wd\myboxB}%
    \ifdim\wd\myboxB<\wd\myboxA%
       \rlap{\hskip 0.5\mylenA\usebox\myboxB}{\usebox\myboxA}%
    \else
        \hskip -0.5\mylenA\rlap{\usebox\myboxA}{\hskip 0.5\mylenA\usebox\myboxB}%
    \fi}
\newtheorem{definition}{Definition}
\newtheorem{assumption}{Assumption}
\let\NAT@parse\undefined
\begin{document}

\title{Robust MPC for Linear Systems with Parametric and Additive Uncertainty:\\ A Novel Constraint Tightening Approach}

\author{Monimoy Bujarbaruah, Ugo Rosolia$^{\star}$, Yvonne R. St{\"u}rz$^{\star}$, Xiaojing Zhang,\\ and Francesco Borrelli\blfootnote{$^\star$authors contributed equally to this work. Emails:\{monimoyb, y.stuerz, fborrelli\}@berkeley.edu, urosolia@caltech.edu, xgeorge.zhang@gmail.com.}
}

\maketitle

\begin{abstract}
We propose a novel approach to design a robust Model Predictive Controller (MPC) for constrained uncertain linear systems. The uncertain system is modeled as linear parameter varying with additive disturbance. 
Set bounds for the system matrices and the additive uncertainty 
are assumed to be known. We formulate 
a novel optimization-based constraint tightening strategy around a predicted nominal trajectory which utilizes these bounds. 
With an appropriately designed terminal cost function and constraint set, we prove robust satisfaction of the imposed constraints by the resulting MPC in closed-loop with the uncertain system, and Input to State Stability of the origin. We highlight the efficacy of our proposed approach via a numerical example. 
\end{abstract}

\section{Introduction}
Model Predictive Control (MPC) is a well established optimal control strategy that is able to handle imposed constraints on system states and inputs \cite{mayne2000constrained}. The MPC approach is based on solving a constrained finite horizon optimal control problem at each time step 
and applying the first optimal input to the plant. 
A key challenge in MPC design is to guarantee robust constraint satisfaction in the presence of uncertainty in the prediction model.  


For uncertain linear systems in presence of \emph{only} an additive disturbance, finding the optimal policy is NP-hard and typically involves dynamic programming \cite[Chapter~15]{borrelli2017predictive},\cite[Chapter~3]{rawlings2009model}, or Min-Max feedback \cite{scokaert1998min,bemporad2003min} approaches. Computationally tractable suboptimal robust MPC techniques such as tube MPC \cite{chisci2001systems,Goulart2006,rakovic2012parameterized} are well understood and widely used. The key idea is to restrict the input policy to affine or piecewise affine state feedback policies and then tighten the state constraints so that all trajectories within a ``tube" satisfy the imposed state constraints for all possible disturbances.

On the other hand, robust MPC design for uncertain linear systems in presence of \emph{both} a mismatch in the system dynamics matrices and an additive disturbance is more involved and is a topic of active research \cite{slsmpc,alonso2020distributed}. Min-Max MPC strategies could be computed in this case, but their computational complexity scales exponentially with the prediction horizon. Restricting the input policy parametrization to affine state feedback policies leads to  computationally tractable ellipsoidal regions of attraction (ROA)~\cite{blanchini1999set}. Such methods are presented in \cite{kothare1996robust,kouvaritakis2000efficient}. Polytopic, homothetic and elastic tube MPC methods with affine or piecewise affine state feedback policy parametrizations are introduced in \cite{langson2004robust,munoz2013recursively,rakovic2016elastic} to address the conservatism inherent to ellipsoidal ROA based methods, such as \cite{kothare1996robust,kouvaritakis2000efficient}. But the online computational complexity of these methods can noticeably increase
while lowering conservatism, as shown in \cite{slsmpc} and \cite[Chapter~5]{kouvaritakis2016model}.
Alternatively, the work \cite{slsmpc,dean2018safely} utilizes a System Level Synthesis \cite{anderson2019system} based approach which obtains robust satisfaction of the imposed constraints with lower conservatism compared to \cite{kothare1996robust,kouvaritakis2000efficient}. 
The approach can also be computationally more efficient than methods such as \cite{langson2004robust,rakovic2013homothetic,rakovic2016elastic}, as demonstrated in \cite{slsmpc}.

Motivated by the work of \cite{slsmpc,dean2018safely}, in this paper we propose a novel robust MPC approach for linear systems that can handle the presence of both a mismatch in the system matrices and an additive disturbance. Instead of using the worst-case constraint tightening tubes around any predicted nominal trajectory, we propose an optimization-based constraint tightening strategy which is a function of decision variables in the control synthesis problem, similar to \cite{lee2000linear,langson2004robust,rakovic2012homothetic,rakovic2012parameterized,evans2012robust,rakovic2013homothetic,munoz2013recursively,fleming2014robust,rakovic2016elastic}. 
Our contributions are summarized as:
\begin{itemize}
    \item We propose a novel constraint tightening strategy which is decoupled into two phases. In the first phase, we bound the effect of model uncertainty 
    on any predicted \emph{nominal} (i.e., uncertainty free) trajectory. These bounds are computed offline. This phase is motivated by \cite{dean2018safely,slsmpc}. 
    In the second phase, the MPC is designed utilizing the above bounds, 
    so that the constraint tightenings are functions of decision variables in the control synthesis problem. 
    This second phase is motivated by tube MPC works such as \cite{lee2000linear,langson2004robust,rakovic2012homothetic,rakovic2012parameterized,rakovic2013homothetic,munoz2013recursively,fleming2014robust,rakovic2016elastic}. 
    
    \item We solve a tractable convex optimization problem online using a shrinking horizon approach for the MPC. With an appropriately constructed terminal set and a terminal cost, we prove robust satisfaction of the imposed constraints by the closed-loop system, and Input to State stability of the origin.
    
    \item With numerical simulations, we compare our proposed MPC approach with the tube MPC of \cite[Section~5]{langson2004robust} and also 
    with the constrained LQR algorithm of \cite{dean2018safely}. 
    In the first case, we obtain at least 3x and up to 20x speedup in online control computations, with an approximately $4\%$ larger ROA by volume. In the latter case, we obtain an approximately 12x larger ROA by volume even with a safe open-loop policy.
\end{itemize}
\subsection*{Notation}
We use $\Vert \cdot \Vert$ to denote the norm of a vector. The dual norm of any vector norm $\Vert x \Vert$ for a vector $x$ is defined as $\Vert x \Vert_* = \sup_{\Vert v \Vert \leq 1}(v^\top x)$. The induced $p$-norm of any matrix $A$ is given by $\Vert A\Vert_p = \sup_{x \neq 0} \frac{\Vert Ax\Vert_p}{\Vert x\Vert_p}$, where $\Vert \cdot \Vert_p$ is the $p$-norm of a vector. The operation $A \otimes B$ denotes the Kronecker product of the matrices $A$ and $B$, and $\mathcal{A} \oplus \mathcal{B}$ denotes the Minkowski sum of the two sets $\mathcal{A}$ and $\mathcal{B}$. The set $K\mathcal{B}$ denotes the set of elements obtained from multiplying each element in the set $\mathcal{B}$ with $K$, i.e., $K \mathcal{B} = \{x: x = bK, b \in \mathcal{B}\}$. A continuous function $\alpha:[0,a) \rightarrow [0, \infty)$ is called a class-$\mathcal{K}$ function if it is strictly increasing in its domain and if $\alpha(0) = 0$. The class-$\mathcal{K}$ function belongs to class-$\mathcal{K}_\infty$ if  $a = \infty$ and $\lim_{r \rightarrow \infty} \alpha(r) = \infty$. A continuous function $\beta:[0,a) \times [0,\infty) \mapsto [0,\infty)$ is called a class-$\mathcal{KL}$ function if for each fixed $s$, the function $\beta(r,s)$ belongs to class-$\mathcal{K}$, and for each fixed $r$, $(i)$ the function $\beta(r,s)$ is decreasing w.r.t.\ $s$ and $(ii)$ $\beta(r,s) \rightarrow 0$ for $s \rightarrow \infty$. A real valued function $\alpha:[a,b] \mapsto \mathbb{R}$ is called Lipschitz with a Lipschitz constant $L$, if for all $x,y \in [a,b]$, we have $\Vert\alpha(x) - \alpha(y)\Vert \leq L\Vert x-y\Vert $. The sign $u \geq v$ between two vectors $u,v$ denotes element-wise inequality. $\mathrm{conv}(X,Y,\dots, Z)$ denotes the set of matrices that can be written as a convex combination of the matrices $X,Y,\dots,Z$. Notation $I_n$ is used to denote an identity matrix of dimension $n$ and $1_n$ denotes a vector of ones of length $n$. The consistency property for any induced $p$-norm and vector $q$-norm is given by $\Vert Xy\Vert_q \leq \Vert X\Vert_p \Vert y\Vert_q$, for any $X \in \mathbb{R}^{d_1 \times d_2}$ and $y \in \mathbb{R}^{d_2}$. The submultiplicativity property for any induced $p$-norm is given by $\Vert X Y \Vert_p \leq \Vert X \Vert_p \Vert Y \Vert_p$. 
\section{Problem Formulation}\label{mot_sec}
We consider linear  system dynamics
\begin{equation}\label{eq:unc_system}
     x_{t+1} = A x_t + B u_t + w_t,~x_0 = x_S,
\end{equation}
where $x_t\in \mathbb{R}^{d}$ is the state at time step $t$, $u_t\in\mathbb{R}^{m}$ is the input, and $A$ and $B$ are system dynamics matrices of appropriate dimensions. We assume that $A$ and $B$ are unknown matrices with estimates $\bar{A}$ and $\bar{B}$ available for control design \cite{campi2002finite}. In particular we assume
\begin{align}\label{eq:matrix_errors}
    & A = \bar{A} + \Delta^\mathrm{tr}_A,~B = \bar{B} + \Delta^\mathrm{tr}_B,
\end{align}
where the \emph{true} parametric uncertainty matrices $\Delta^\mathrm{tr}_A$ and $\Delta^\mathrm{tr}_B$ are unknown and belong to  convex and compact sets \begin{align}\label{err_in_sets_pol}
    & \Delta^\mathrm{tr}_A \in \mathcal{P}_A,~\Delta^\mathrm{tr}_B \in \mathcal{P}_B.
\end{align}
We further assume that  $\mathcal{P}_A$ and $\mathcal{P}_B$ are given by the convex hulls of known \emph{vertex} matrices $\{\Delta_A^{(1)}, \Delta_A^{(2)}, \dots, \Delta_A^{(n_a)}\}$ and $\{\Delta_B^{(1)}, \Delta_B^{(2)}, \dots, \Delta_B^{(n_b)}\}$, with fixed $n_a, n_b >0$:
\begin{subequations}\label{eq:pol_out_termset_cond}
\begin{align}
    & \mathcal{P}_A = \mathrm{conv}(\Delta_A^{(1)}, \Delta_A^{(2)}, \dots, \Delta_A^{(n_a)}),\\
    & \mathcal{P}_B = \mathrm{conv}(\Delta_B^{(1)}, \Delta_B^{(2)}, \dots, \Delta_B^{(n_b)}).
\end{align}
\end{subequations}
System~\eqref{eq:unc_system} is also affected by a disturbance $w_t$ with a convex and compact support $\mathbb{W} \subset \mathbb{R}^{d}$, i.e., $w_t\in\mathbb{W},~\forall~t\geq0$.

\begin{remark}
The proposed framework in this paper is also valid for time varying $\Delta^\mathrm{tr}_A$ and $\Delta^\mathrm{tr}_B$ satisfying \eqref{eq:pol_out_termset_cond}. 
\end{remark}

Let the MPC horizon be $N$. Let $x_{k|t}$ denote the predicted state at time step $k$ for any possible uncertainty realization, obtained by applying the predicted input policies $\{u_{t|t},u_{t+1|t}(\cdot),\dots,u_{k-1|t}(\cdot)\}$ to system~\eqref{eq:unc_system}, and $\{\bar{x}_{k|t}, \bar{u}_{k|t}\}$ with $\bar{u}_{k|t} = u_{k|t}(\bar{x}_{k|t})$ denote the nominal state and corresponding input respectively. We are interested in synthesizing a robust MPC for the uncertain linear system \eqref{eq:unc_system}, by repeatedly solving the following finite time optimal control problem:
\begin{subequations}\label{eq:generalized_InfOCP}
	\begin{align}
V^{\star}(x_t,& \mathcal{P}_A, \mathcal{P}_B) = \notag \\
		\displaystyle\min_{U_t(\cdot)} ~ & \displaystyle\sum\limits_{k = t}^{t+N-1} \ell \left( \bar{x}_{k|t}, u_{k|t}\left(\bar{x}_{k|t}\right) \right)+Q(\bar{x}_{t+N|t}) \label{eq:FTOCP_cost} \\
		\text{s.t.,} & ~~~ \bar{x}_{k+1|t} = \bar{A} \bar{x}_{k|t} + \bar{B} u_{k|t}(\bar{x}_{k|t}), \label{eq:FTOCP_nominal} \\
		& ~~~x_{k+1|t} = Ax_{k|t} + Bu_{k|t}(x_{k|t}) + w_{k|t}, \label{eq:FTOCP_trueModel} \\
		&~~~\textnormal{with } A = \bar{A} + \Delta_A, B = \bar{B} + \Delta_B, \label{eq:FTOCP_modelUncertanty}\\[1ex]
		&~~~ H^x x_{k|t} \leq h^x,
		H^u u_{k|t}(x_{k|t}) \leq h^u, \label{eq:FTOCP_constr} \\
		&~~~x_{t+N|t} \in \mathcal{X}_N, \label{FTOCP_termC}\\[1ex]
		&~~~\forall w_{k|t} \in \mathbb W,~\forall \Delta_A \in \mathcal{P}_A,~\forall \Delta_B \in \mathcal{P}_B, \label{eq:FTOCP_uncertanty} \\
		&~~~\forall k \in \{t,t+1,\dots,(t+N-1)\}, \nonumber\\
		&~~~x_{t|t}=\bar{x}_{t|t} = x_t \nonumber,
	\end{align}
\end{subequations}
with $U_t(\cdot) = \{u_{t|t},u_{t+1|t}(\cdot),\dots,u_{t+N-1|t}(\cdot)\}$, and applying the optimal MPC policy
\begin{align}\label{eq:mpc_pol_formulation}
    u^\mathrm{MPC}_t(x_t) = u^\star_{t|t}(x_t),
\end{align}
to system \eqref{eq:unc_system} in closed-loop.
Problem \eqref{eq:generalized_InfOCP} is carried over to the space of feedback policies, 
$u_i(x_i)$ which map the set of feasible initial states, subset of $\mathbb{R}^d$, to the set of feasible inputs, subset of $\mathbb{R}^m$. 
The objective is to minimize the cost associated with the nominal model~\eqref{eq:FTOCP_nominal}. The true model~\eqref{eq:FTOCP_trueModel} and the uncertainty description~\eqref{eq:FTOCP_modelUncertanty} are used to guarantee that the constraints~\eqref{eq:FTOCP_constr}-\eqref{FTOCP_termC} are satisfied for all uncertainty realizations in~\eqref{eq:FTOCP_uncertanty}, where $H^x \in \mathbb{R}^{r \times d}, h^x \in \mathbb{R}^r, H^u \in \mathbb{R}^{o \times m}$ and $h^u \in \mathbb{R}^o$ describe the polytopes of states and input constraints. Finally, $\ell( x, u) = x^\top P x + u^\top R u$ is the stage cost and  $Q(x) = x^\top P_N x$ is the terminal cost. Assumption~\ref{assump:stagecost}-\ref{assump: termcost} in Section~\ref{sec:feas_n_stab} detail assumptions on $P,R$ and $P_N$. There are three main challenges with solving \eqref{eq:generalized_InfOCP}:
\begin{enumerate}[(A)]
    \item  The state and input constraints are to be satisfied robustly under the presence of mismatch in the system dynamics matrices and disturbances. In other words, \eqref{eq:FTOCP_constr}-\eqref{FTOCP_termC}-\eqref{eq:FTOCP_uncertanty} need to be reformulated so that they can be fed to a numerical programming algorithm.
    
    \item Optimizing over policies $\{u_0, u_1(\cdot), u_2(\cdot), \dots\}$ in \eqref{eq:generalized_InfOCP} involves an optimization over infinite dimensional function spaces. This, in general,  is not computationally tractable for constrained linear systems.
    
    \item The feasibility of constraints \eqref{eq:FTOCP_constr} is to be robustly guaranteed at all time steps $t \geq 0$, for all admissible $\Delta_A \in \mathcal{P}_A, \Delta_B \in \mathcal{P}_B$, and for all $w_t~\in~\mathbb{W}$, such that 
    \begin{align*}
    &H^x x_t \leq h^x, H^u u^\mathrm{MPC}_t(x_t) \leq h^u, ~ \forall w_t \in \mathbb{W}, \forall t \geq 0,
    \end{align*}
where $x_{t+1} = Ax_t + Bu^\mathrm{MPC}_t(x_t) + w_t$.
\end{enumerate}
As common in the MPC literature, in this paper challenge (B) is addressed by restricting the input policy to the class of affine state feedback policies. Challenge (C) is addressed by appropriately constructing the terminal conditions, i.e., terminal set $\mathcal{X}_N$ in \eqref{FTOCP_termC} and terminal cost $Q(\cdot)$ in \eqref{eq:FTOCP_cost}, and using a safe backup policy in case \eqref{eq:generalized_InfOCP} loses feasibility. 

Various works in literature \cite{kothare1996robust,kouvaritakis2000efficient,langson2004robust,evans2012robust,munoz2013recursively,rakovic2013homothetic,fleming2014robust,rakovic2016elastic,dean2018safely,slsmpc} have been proposed to tackle Challenge (A). 
Our approach fundamentally differs from the others, because $(i)$ we compute bounds required for constraint tightenings in \eqref{eq:FTOCP_constr}-\eqref{FTOCP_termC} in a computationally expensive way offline, and then $(ii)$ we solve computationally efficient convex optimization problems online. This can lead to a large region of attraction, while limiting the online computational burden, as shown by our simulations in Section~\ref{sec:numerics}.
\section{Robust  MPC Design}\label{sec:mpc}
In this section we present the steps of the proposed robust MPC design approach, which solves problem \eqref{eq:generalized_InfOCP} at every time step $t \geq 0$. 
\subsection{Predicted State Evolution}\label{ssec:pred_state}
We first denote the sequences of vectors: 
\begin{equation}\label{notations1}
\begin{aligned}
&\mathbf{u}_t = [{u}^\top_{t|t}, {u}^\top_{t+1|t}(\cdot), \dots, {u}^\top_{t+N-1|t}(\cdot)]^\top,\\
& \bar{\mathbf{x}}_t = [\bar{x}^\top_{t|t}, \bar{x}^\top_{t+1|t}, \dots, \bar{x}^\top_{t+N-1|t}]^\top. 
\end{aligned}
\end{equation}
In this section, we use the following two observations: First, keeping the nominal state trajectory
$\bar{\mathbf{x}}_t$ as a decision variable in the MPC problem \eqref{eq:generalized_InfOCP} maintains certain structure that can be exploited to bound the effect of model uncertainty
on a predicted nominal trajectory, similar to \cite{slsmpc,dean2018safely}. 
And second, the predicted nominal trajectory and its associated inputs 
along the horizon 
are computed by reformulating \eqref{eq:generalized_InfOCP} and solving a robust optimization problem, similar to tube MPC approaches such as \cite{lee2000linear,langson2004robust,rakovic2012homothetic,rakovic2012parameterized,rakovic2013homothetic,munoz2013recursively,fleming2014robust,rakovic2016elastic}. We thus attempt to merge the benefits of both these ideas in this work. 



Recall the nominal system dynamics from \eqref{eq:FTOCP_nominal} given as
    $\bar x_{t+1} = \bar{A} \bar x_t + \bar{B} \bar{u}_t$,
with $\bar{u}_t = u_t(\bar{x}_t)$.
Denote the vectors $\mathbf{x}_t, \mathbf{w}_t \in \mathbb{R}^{dN}$ and $\Delta \mathbf{u}_t \in \mathbb{R}^{mN}$ as:
\begin{equation}\label{eq:state_prop_bigg}
\begin{aligned}
& {\mathbf{x}}_t =  \begin{bmatrix}
     x_{t+1|t}^\top &
     x_{t+2|t}^\top &
    \dots &
     x_{t+N|t}^\top
    \end{bmatrix}^\top,\\
    & \mathbf{w}_t = \begin{bmatrix}
     w_{t|t}^\top &
     w_{t+1|t}^\top &
    \dots 
     w_{t+N-1|t}^\top
    \end{bmatrix}^\top,\\
    & \Delta \mathbf{u}_t = \begin{bmatrix}
    \Delta u_{t|t}^\top &
     \Delta u_{t+1|t}^\top(\cdot) &
     \dots & \Delta u_{t+N-1|t}^\top(\cdot)
    \end{bmatrix}^\top,  
\end{aligned}
\end{equation}
where $\Delta u_{k|t}(\cdot) = u_{k|t}(\cdot) - \bar{u}_{k|t}$ for $k \in \{t,t+1,\dots,t+N-1\}$. Using \eqref{notations1} and  \eqref{eq:state_prop_bigg}, we can write the state evolution along the prediction horizon as:
\begin{equation}\label{eq:state_propagation}
\begin{aligned}
  \mathbf{x}_t & = \mathbf{A}^x \bar{\mathbf{x}}_t + \mathbf{A}^u \mathbf{u}_t + \mathbf{A}^{\Delta u} \Delta \mathbf{u}_t
    + \mathbf{A}^w \mathbf{w}_t,
    \end{aligned}
\end{equation}
where $\mathbf{x}_t$ denotes the prediction of possible evolutions of the realized states\footnote{Note,~\eqref{eq:state_prop_bigg} implies \eqref{eq:state_propagation} is a compact state update equation.}, and in \eqref{eq:state_propagation} the predicted nominal states along the horizon, i.e., $\bar{\mathbf{x}}_t$ from \eqref{notations1} appears directly and  \emph{not} expressed in terms of $\{x_t, u_{t|t}, u_{t+1|t}(\cdot), \dots, u_{t+N-1|t}(\cdot)\}$, as in \cite{Goulart2006}. The prediction dynamics matrices $\mathbf{A}^x, \mathbf{A}^u, \mathbf{A}^{\Delta u}$ and $\mathbf{A}^w$ in \eqref{eq:state_propagation} depend on $\bar{B}, \Delta_A, \Delta_B$ and $(\bar{A} + \Delta_A), (\bar{A} + \Delta_A)^2, \dots, (\bar{A} + \Delta_A)^{N-1}$. 
We define $A_\Delta = \bar{A} + \Delta_A$ for some possible $\Delta_A \in \mathcal{P}_A$. Then $A_\Delta \in \mathcal{P}_{A_\Delta}$, with the set $\mathcal{P}_{A_\Delta}$ defined as:
\begin{align}\label{padelta}
&\mathcal{P}_{A_\Delta} = \{A_m: A_m = \bar{A} + \Delta_A,~\Delta_A \in \mathcal{P}_A\}.
\end{align}
Using \eqref{padelta} we rewrite the matrices in \eqref{eq:state_propagation} as follows:
\begin{equation}\label{simplified_dyn_matrices}
\begin{aligned}
    & \mathbf{A}^x 
     =  \bar{\mathbf{A}} + \Big (\bar{\mathbf{A}}_1 + \mathbf{A}_\delta \Big ) \mathbf{\Delta}_A,\\
     &\mathbf{A}^u =  \bar{\mathbf{B}} + \Big ( \bar{\mathbf{A}}_1 + \mathbf{A}_\delta \Big ) \mathbf{\Delta}_B,\\
    & \mathbf{A}^{\Delta u} = \Big ( \bar{\mathbf{A}}_1 - \mathbf{I}_d + \mathbf{A}_\delta  \Big) \bar{\mathbf{B}},~\textnormal{and},\\
    &\mathbf{A}^w  = \mathbf{I}_d + \bar{\mathbf{A}}_v \mathbf{A}_\Delta,
\end{aligned}
\end{equation}
where $\mathbf{I}_d = (I_N \otimes I_d) \in \mathbb{R}^{dN \times dN}, \bar{\mathbf{A}} = (I_N \otimes \bar{A}) \in \mathbb{R}^{dN \times dN}, \bar{\mathbf{B}} = (I_N \otimes \bar{B}) \in \mathbb{R}^{dN \times mN}, \mathbf{\Delta}_A = (I_N \otimes \Delta_A) \in \mathbb{R}^{dN \times dN}$, and $\mathbf{\Delta}_B = (I_N \otimes \Delta_B) \in \mathbb{R}^{dN \times mN}$. The matrices $\bar{\mathbf{A}}_1$, $\mathbf{A}_\delta$,  $\bar{\mathbf{A}}_v$ and $\mathbf{A}_\Delta$ are defined in \ref{A1} in the Appendix. Matrices $\mathbf{A}_\delta$ and $\mathbf{A}_\Delta$ depend on parametric uncertainty matrices  $\Delta_A$ and $\Delta_B$. In the next sections, we substitute the matrices from \eqref{simplified_dyn_matrices} in \eqref{eq:state_propagation} in order to design a control policy that robustly satisfies \eqref{eq:FTOCP_constr}-\eqref{FTOCP_termC} along the prediction horizon. 
\subsection{Novel Optimization-Based Constraint Tightening}\label{sec:opt_tight}
The terminal set $\mathcal{X}_N$ in \eqref{FTOCP_termC} is defined by 
$\mathcal{X}_N = \{x: H^x_N x \leq h^x_N\}$, with $H^x_N \in \mathbb{R}^{r_N \times d}, h^x_N \in \mathbb{R}^{r_N}$. 
We denote the matrix $\mathbf{F}^x = \mathrm{diag}(I_{N-1} \otimes H^x, H^x_N) \in \mathbb{R}^{(r(N-1)+r_N) \times dN}$, $\mathbf{f}^x = (h^x, h^x, \dots, h_N^x) \in \mathbb{R}^{r(N-1)+r_N}$ for any given $N$. Using \eqref{eq:state_propagation}, the robust state constraints in \eqref{eq:generalized_InfOCP} for predicted states along the prediction horizon and at the end of the horizon can then be written as:
\begin{align}\label{ugo_wants_it_state}
    & \mathbf{F}^x \mathbf{x}_t \leq \mathbf{f}^x,~\forall \Delta_A \in \mathcal{P}_A,~\forall \Delta_B \in \mathcal{P}_B,~\forall w_t \in \mathbb{W}.
\end{align}
We guarantee satisfaction of \eqref{ugo_wants_it_state} using the following: Suppose for any $a,b$, we need to guarantee $a \leq b$. We first obtain an upper bound $c$, such that $a\leq c$, and then we impose $c \leq b$. This is a sufficient condition for $a \leq b$. Accordingly, using \eqref{eq:state_propagation} and \eqref{simplified_dyn_matrices}
constraint \eqref{ugo_wants_it_state} for all time steps $t \geq 0$ can be replaced row-wise as:
\begin{align}\label{eq:fin_state_con}
    & \mathbf{F}^x_i((\bar{\mathbf{A}} + \bar{\mathbf{A}}_1 \mathbf{\Delta}_A) \bar{\mathbf{x}}_t + (\bar{\mathbf{B}} + \bar{\mathbf{A}}_1 \mathbf{\Delta}_B) \mathbf{u}_t + (\bar{\mathbf{A}}_1 - \mathbf{I}_d)\bar{\mathbf{B}} \Delta \mathbf{u}_t + \mathbf{w}_t) + \mathbf{t}^i_1 \Vert \bar{\mathbf{x}}_t \Vert + \mathbf{t}^i_2 \Vert \mathbf{u}_t \Vert + \cdots \nonumber \\
    & ~~~~~~~~~~~~~~~~~~~~~~~~~~~~~~~~~~~~~~~~~~~~~~~~~~~~~~~~~~~~~~~~~~~~~~~~~ + \mathbf{t}^i_3 \Vert \Delta \mathbf{u}_t \Vert + \mathbf{t}^i_w \Vert \mathbf{w}_t \Vert \leq \mathbf{f}^x_i, \nonumber\\[1ex]
    & \forall \Delta_A \in \mathcal{P}_A,~\forall \Delta_B \in \mathcal{P}_B,~\forall w_t \in \mathbb{W},
\end{align}
for $i \in \{1,2,\dots,r(N-1) + r_N\}$, where recall that $r$ and $r_N$ are the number of rows of $H^x$ and $H^x_N$, respectively. In Appendix~\ref{GIVEMEP} we detail the derivation of \eqref{eq:fin_state_con} from \eqref{ugo_wants_it_state} and the computation of the bounds $\{\mathbf{t}^i_w, \mathbf{t}^i_1, \mathbf{t}^i_2, \mathbf{t}^i_3\}$ for rows $i \in \{1,2,\dots, r(N-1)+r_N\}$. In \eqref{eq:fin_state_con} we have bounded the effect of model mismatch, i.e., the matrices $\mathbf{A}_\delta, \mathbf{A}_\Delta, \mathbf{\Delta}_A, \mathbf{\Delta}_B$ on predicted nominal states. These bounds, denoted as $\{\mathbf{t}^i_w, \mathbf{t}^i_1, \mathbf{t}^i_2, \mathbf{t}^i_3\}$ for rows $i \in \{1,2,\dots, r(N-1)+r_N\}$, are computed \emph{offline}, and are derived in detail in \eqref{eq:bound_mainterm}-\eqref{fourthbound} in the Appendix, where we also show that~\eqref{eq:fin_state_con} is sufficient for~\eqref{ugo_wants_it_state}. 

In constraint \eqref{eq:fin_state_con}, note that the decision variables are the predicted nominal trajectory $\bar{\mathbf{x}}_t$, and the sequence of input policies $\mathbf{u}_t$. These decision variables multiply effects of the bounds $\mathbf{t}^i_1, \mathbf{t}^i_2$ and $\mathbf{t}^i_3$. In conclusion, the tightening
of the original constraint \eqref{eq:FTOCP_constr} 
proposed in~\eqref{eq:fin_state_con} depends on the optimization variables, $\bar{\mathbf{x}}_t$, $\mathbf{u}_t$, and $\Delta \mathbf{u}_t$. This is a key contribution of our work. 
Alternatively in \cite{dean2018safely,slsmpc}, the constraint tightening is obtained bounding the closed-loop system response, which involves the norm of the product between the decision variables and the uncertainty. 
Therefore the method in \cite{dean2018safely,slsmpc} needs to resort to a grid search over parameters to obtain sufficient conditions for satisfying \eqref{eq:FTOCP_constr} robustly. Tube MPC methods such as \cite{langson2004robust,munoz2013recursively,rakovic2013homothetic,fleming2014robust,rakovic2016elastic}, summarized in \cite[Chapter~5]{kouvaritakis2016model}, could lead to tightenings equivalent to \eqref{eq:fin_state_con} under appropriately chosen parametrization of tube cross sections. However, such parametrizations aren't immediate. 


\subsection{Control Policy Parametrization}\label{sec:pol_par}
Recall Challenge (B) mentioned in Section~\ref{mot_sec}. To address this, we restrict ourselves to the affine disturbance feedback parametrization \cite{lofberg2003minimax,Goulart2006} for MPC control synthesis. For all predicted steps $k \in \{t,t+1,\dots,t+N-1\}$ over the MPC horizon, the control policy is chosen as:
\begin{equation}\label{eq:inputParam_DF_OL}
	u_{k|t}(x_{k|t}) = \sum \limits_{l=t}^{k-1}M_{k,l|t} w_{l|t}  + \bar{u}_{k|t},
\end{equation}
where $M_{k|t}$ are the \emph{planned} feedback gains at time step $t$ and $\bar{u}_{k|t} = u_{k|t}(\bar{x}_{k|t})$ are the auxiliary nominal inputs. Then the sequence of predicted inputs from \eqref{eq:inputParam_DF_OL} can be written as $\mathbf{u}_t = \mathbf{M}^{(N)}_t\mathbf{w}_t + \mathbf{\bar{u}}^{(N)}_t$ at time step $t$, where $\mathbf{M}^{(N)}_t\in \mathbb{R}^{mN \times dN}$ and $\mathbf{\bar{u}}^{(N)}_t \in \mathbb{R}^{mN}$ are
\begin{equation*}
\begin{aligned}
  &\mathbf{M}^{(N)}_t  =  \begin{bmatrix}0& \dots&\dots&0\\
  M_{t+1,t}& 0 & \dots & 0\\
  \vdots &\ddots& \ddots &\vdots\\
  M_{t+N-1,t}& \dots& M_{t+N-1,t+N-2}& 0
  \end{bmatrix},\\
  & \mathbf{\bar{u}}^{(N)}_t = [\bar{u}_{t|t}^\top,\bar{u}_{t+1|t}^\top , \dots, \bar{u}_{t+N-1|t}^\top]^\top.
\end{aligned}
\end{equation*}
\subsection{Terminal Set Construction}\label{ssec:term_set}
The terminal set $\mathcal{X}_N$ is designed in this section to address Challenge (C) mentioned in Section~\ref{sec:mpc}. In particular, the terminal set $\mathcal{X}_N$ is chosen as the maximal robust positive invariant set of an autonomous system under a linear feedback policy, chosen as
\begin{align}\label{eq:term_pol}
    \kappa_N(x) = Kx,
\end{align}
where $K \in \mathbb{R}^{m \times d}$ is the feedback gain. Recall the sets $\mathcal{P}_A$ and $\mathcal{P}_B$ from \eqref{eq:pol_out_termset_cond} and $\mathcal{P}_{A_\Delta}$ from \eqref{padelta}. Now consider 
\begin{align*}
    & \mathcal{P}_{B_\Delta} = \{B_m: B_m = \bar{B} + \Delta_B,~\Delta_B \in \mathcal{P}_B\}. 
\end{align*}
Under policy \eqref{eq:term_pol}, the  closed-loop system dynamics matrix considered for constructing the terminal set satisfies
\begin{align*}
A^\mathrm{cl}=A + BK \in \mathcal{P}_{A_\Delta} \oplus K\mathcal{P}_{B_\Delta}. 
\end{align*}
The following assumption guarantees that $K$ robustly stabilizes the system and analogous assumptions are common in robust MPC literature
\cite{kothare1996robust,langson2004robust,rakovic2013homothetic,fleming2014robust,munoz2013recursively, vicente2019stabilizing}. 

\begin{assumption}\label{assump:stable}
$A^\mathrm{cl}_m = (A_m + B_mK)$ is stable for all $A_m \in \mathcal{P}_{A_\Delta}$ and $B_m \in \mathcal{P}_{B_\Delta}$. 
\end{assumption}
The gain $K$ satisfying Assumption~\ref{assump:stable} can be chosen by following a method such as \cite{kothare1996robust, boyd1994linear}. Using Assumption~\ref{assump:stable}, set ${\mathcal{X}}_N$ can then be computed as the maximal robust positive invariant set for the autonomous dynamics 
\begin{align}\label{eq:aut_sys_term}
    x_{t+1} = (A_m+B_mK) x_t + w_t,
\end{align}
for all $A_m \in \mathcal{P}_{A_\Delta}, B_m \in \mathcal{P}_{B_\Delta}$, and $w_t \in \mathbb{W}$. That is,
\begin{equation}\label{eq:term_set_DF}
    \begin{aligned}
    &\mathcal{X}_N \subseteq \{x|H^x x \leq h^x,~H^uKx \leq h^u\},\\
    &(A_m + B_mK)x + w \in \mathcal{X}_N,~\\
    &\forall x\in \mathcal{X}_N,~\forall A_m \in \mathcal{P}_{A_\Delta},~\forall B_m \in \mathcal{P}_{B_\Delta},~\forall w \in \mathbb{W}.
    \end{aligned}
\end{equation}
See \cite[Section~10.3.3]{borrelli2017predictive} for a fixed point iteration algorithm used to compute $\mathcal{X}_N$. This algorithm has no convergence guarantees  \cite{vidal1999controlled}.

\subsection{Tractable MPC Problem with Safe Backup}\label{ssec:mpc_problem}
In this section we present the MPC reformulation of \eqref{eq:generalized_InfOCP} which guarantees robust constraint satisfaction at all time steps $t \geq 0$, and Input to State Stability of the origin.
We start with the following observation: The terminal set $\mathcal{X}_N$ from \eqref{eq:term_set_DF} is robustly invariant to all uncertainty of the form: $\forall \Delta_A \in \mathcal{P}_A,~\forall \Delta_B \in \mathcal{P}_B,~\forall w \in \mathbb{W},~\forall t \geq 0$, when the state feedback policy $\kappa_N(x) = Kx$ is used in \eqref{eq:unc_system}. However, along the prediction horizon we use bounds $\{\mathbf{t}^i_w, \mathbf{t}^i_1, \mathbf{t}^i_2, \mathbf{t}^i_3\}$, which are obtained by more conservative tightenings from H{\"o}lder's and triangle inequalities, and induced norm consistency and submultiplicativity properties (see \eqref{eq:bound_mainterm}-\eqref{fourthbound} in the Appendix). Thus the uncertainty bounds along the horizon over-approximate the effect of the true uncertainty used to compute the  terminal set. 
This implies that the classical shifting argument \cite[Chapter~12]{borrelli2017predictive} for recursive MPC feasibility cannot be used. As a consequence, to ensure robust satisfaction of constraints~\eqref{eq:FTOCP_constr} by system~\eqref{eq:unc_system} at all time steps and Input to State Stability of the origin,  we will use the following strategy: $(i)$ at any given time step, we solve the MPC reformulation of problem \eqref{eq:generalized_InfOCP} in a shrinking horizon fashion, i.e., we choose the MPC horizon length at time step $t$, denoted by $N_t$, as:
\begin{align}\label{eq:n_t}
N_t = \begin{cases} N-t,~ \textnormal{if } t \in \{0,1,\dots N-2\}, \\ 1,~\textnormal{otherwise}. \end{cases}
\end{align}
If the shrinking horizon MPC problem is infeasible, we use the time-shifted optimal policy from a previous time step as a safe backup policy to guarantee robust satisfaction of \eqref{eq:FTOCP_constr}, and $(ii)$  we design the terminal cost matrix $P_N$ so that the MPC open-loop cost is a Lyapunov function inside $\mathcal{X}_N$. This design choice, together with the shrinking horizon strategy, which guarantees finite time convergence to $\mathcal{X}_N$, allows us to show Input to State Stability of the origin.

We introduce the following set of required notations. 
Denote the set $\mathbb{W} = \{w \in \mathbb{R}^d: H^w w \leq h^w\}$ with $H^w \in \mathbb{R}^{a \times d}$ and $h^w \in \mathbb{R}^{a}$. For a horizon length of $N_t$ from \eqref{eq:n_t}, this gives $\mathbf{W} = \{\mathbf{w} \in \mathbb{R}^{dN_t}: \mathbf{H}^w \mathbf{w} \leq \mathbf{h}^w\}$, with $\mathbf{H}^w = I_{N_t} \otimes H^w \in \mathbb{R}^{aN_t \times dN_t}$ and $\mathbf{h}^w = (h^w, h^w, \dots, h^w) \in \mathbb{R}^{aN_t}$. Also denote the matrices $\mathbf{H}^u = I_{N_t} \otimes H^u \in \mathbb{R}^{oN_t \times m N_t}$, and $\mathbf{h}^u = ({h}^u, {h}^u, \dots, {h}^u) \in \mathbb{R}^{o N_t}$. Moreover, we denote vectors $\mathbf{t}^{(N_t)}_j = [\mathbf{t}^1_j, \mathbf{t}^2_j, \dots, \mathbf{t}^{r(N_t-1)+r_N}_j ]^\top$ for the indices $j \in \{w,1,2,3\}$. 
We use the notation $\bar{\mathbf{x}}^{(N_t)}_t$ for each horizon length $N_t$, to explicitly indicate the varying dimension of the vector $\bar{\mathbf{x}}_t$ previously introduced in \eqref{eq:state_prop_bigg}. 

In \eqref{eq:fin_state_con} the input policy was not specified. We now use policy parametrization \eqref{eq:inputParam_DF_OL} in \eqref{eq:fin_state_con} and consider the following two cases\footnote{The dimensions of $\mathbf{F}^x, \mathbf{f}^x, \bar{\mathbf{A}}, \bar{\mathbf{B}}, \bar{\mathbf{A}}_1, \mathbf{I}_d$ and $\mathbf{w}_t$ vary depending on $N_t$. We omit showing this explicitly for brevity.}: 
\begin{subequations}\label{eq:state_robcon}
\begin{align}
& \textnormal{\textbf{Case 1}: ($N_t \geq 2$, i.e., $t \leq  N-2$)} \nonumber \\
& \max_{\mathbf{w}_t \in \mathbf{W}}  \mathbf{F}^x \Big (  \bar{\mathbf{A}} \bar{\mathbf{x}}^{(N_t)}_t + \bar{\mathbf{B}} (\mathbf{M}^{(N_t)}_t \mathbf{w}_t + \bar{\mathbf{u}}^{(N_t)}_t) + (\bar{\mathbf{A}}_1 - \mathbf{I}_d)\bar{\mathbf{B}}\mathbf{M}^{(N_t)}_t \mathbf{w}_t + \mathbf{w}_t \Big ) \leq \mathbf{f}_\mathrm{tight}^x, \label{Ng1_con}\\
& \textnormal{\textbf{Case 2}: ($N_t = 1$, i.e., $t \geq N-1$)} \nonumber \\ 
& \max_{\substack{{w}_t \in \mathbb{W}\\ \Delta_A \in \mathcal{P}_A \\ \Delta_B \in \mathcal{P}_B}}  H^x_N((\bar{{A}}+{\Delta}_A) \bar{\mathbf{x}}^{(1)}_t + (\bar{{B}} + {\Delta}_B) \bar{\mathbf{u}}^{(1)}_t + {w}_t) \leq h^x_N, \label{N1_con}
\end{align}
\end{subequations}
for $N_t \in \{1,2,\dots, N\}$. The tightened set of constraints  $\mathbf{f}^x_{\mathrm{tight}}$ are given by
\begin{align}\label{tight1}
\!\mathbf{f}^x_\mathrm{tight}& \!=\! \mathbf{f}^x\! -\! \mathbf{t}^{(N_t)}_{\delta 1} \Vert \bar{\mathbf{x}}^{(N_t)}_t \Vert \!-\! \mathbf{t}^{(N_t)}_{\delta 3} \Vert \mathbf{M}^{(N_t)}_t \Vert_p \mathbf{w}_\mathrm{max} - \mathbf{t}^{(N_t)}_{\delta 2} \Vert \bar{\mathbf{u}}^{(N_t)}_t \Vert - \mathbf{t}^{(N_t)}_w \mathbf{w}_\mathrm{max},
\end{align}
with $\Vert \mathbf{w}_t \Vert \leq \mathbf{w}_\mathrm{max}$ for all $t \geq 0$, where
\begin{equation}\label{tdel1_etc}
\begin{aligned}
    & \mathbf{t}^{(N_t)}_{\delta 1} = \mathbf{t}^{(N_t)}_{\delta A} + \mathbf{t}^{(N_t)}_1,~\mathbf{t}^{(N_t)}_{\delta 2} = \mathbf{t}^{(N_t)}_{\delta B} + \mathbf{t}^{(N_t)}_2,\\
    & \mathbf{t}^{(N_t)}_{\delta 3} = \mathbf{t}^{(N_t)}_{\delta B} + \mathbf{t}^{(N_t)}_2 + \mathbf{t}^{(N_t)}_3,
\end{aligned}
\end{equation}
using the bounds
\begin{subequations}\label{eq:deladelb_bounds}
\begin{align}
    & \max_{\Delta_A \in \mathcal{P}_A} \Vert \mathbf{F}^x_i \bar{\mathbf{A}}_1 \mathbf{\Delta}_A \Vert_* 
    = \mathbf{t}^{(N_t),i}_{\delta A},\\
    & \max_{\Delta_B \in \mathcal{P}_B} \Vert \mathbf{F}^x_i \bar{\mathbf{A}}_1 \mathbf{\Delta}_B \Vert_* 
    = \mathbf{t}^{(N_t),i}_{\delta B},
\end{align}
\end{subequations}
for $i \in \{1,2,\dots, r(N_t-1)+r_N\}$. See \ref{Ap_bnd_der} in the Appendix for a derivation of \eqref{eq:state_robcon}-\eqref{tight1} from \eqref{eq:fin_state_con} using the bounds \eqref{tdel1_etc}. Having formulated the state constraints, the input constraints in \eqref{eq:FTOCP_constr} along the horizon can be written as:
\begin{align}\label{eq:input_robcon} 
    & \max_{\mathbf{w}_t \in \mathbf{W}}  \mathbf{H}^u \Big ( \mathbf{M}^{(N_t)}_t \mathbf{w}_t + \bar{\mathbf{u}}^{(N_t)}_t \Big) \leq \mathbf{h}^u, 
\end{align}
for $N_t \in \{1,2,\dots, N\}$. Using \eqref{eq:state_robcon}-\eqref{eq:input_robcon}, at any time step $t$ we then solve 
\begin{equation}\label{eq:MPC_R_fin_trac}
	\begin{aligned} 
	  V_{t \rightarrow t+N_t}^{\mathrm{MPC}}&(x_t, \mathbf{t}^{(N_t)}, N_t)   :=  \\
	& \min_{\substack{\mathbf{M}^{(N_t)}_t, \mathbf{z}_t^{(N_t)}}} ~ (\mathbf{z}_t^{(N_t)})^\top \bar Q^{(N_t)} \mathbf{z}_t^{(N_t)} \\
		& ~~~~~\text{s.t., }~~~~~ G^{(N_t)}_\mathrm{eq}\mathbf{z}_t^{(N_t)} = b^{(N_t)}_\mathrm{eq} x_t,  \\
       & ~~~~~~~~~~~~~~~~ \eqref{N1_con},\eqref{eq:input_robcon}~\textnormal{if $N_t = 1$},\\
       &~~~~~~~~~~~~~~~~ \eqref{Ng1_con},\eqref{eq:input_robcon}~\textnormal{if $N_t > 1$},\\
        &~~~~~~~~~~~~~~~~~\bar{x}_{t|t} = x_t,
	\end{aligned}
\end{equation}
where we have denoted 
\begin{align*}
    & \mathbf{z}_t^{(N_t)} = \begin{bmatrix} (\bar{\mathbf{x}}^{(N_t)}_t)^\top &  \bar{x}_{t+N_t|t}^\top &  (\bar{\mathbf{u}}^{(N_t)}_t)^\top \end{bmatrix}^\top, \\
    & \mathbf{t}^{(N_t)} = \{\mathbf{t}^{(N_t)}_{w},  \mathbf{t}^{(N_t)}_{1},  \mathbf{t}^{(N_t)}_{2},  \mathbf{t}^{(N_t)}_{3}\}, \\
    & \bar Q^{(N_t)} = \text{diag}(I_{N_t} \otimes P, P_N, I_{N_t} \otimes R),\\
    & G^{(N_t)}_\mathrm{eq} = \begin{bmatrix} I_d & 0 & 0 & \cdots & 0 & 0 & 0 & 0 & \cdots & 0 \\ -\bar{A} & I_d & 0 & \cdots & 0 & 0 & -\bar{B} & 0 & \cdots & 0 \\ 0 & -\bar{A} & I_d & \cdots & 0 & 0 & 0 & -\bar{B} & \cdots & 0 \\ \vdots & \vdots & \vdots & \ddots & \vdots & \vdots & \vdots & \vdots & \ddots & \vdots \\ 0 & 0 & 0 & \cdots & -\bar{A} & I_d & 0 & 0 & \cdots & -\bar{B} \end{bmatrix}, \\
    & b^{(N_t)}_\mathrm{eq} = \begin{bmatrix} I_d \\ 0_{dN_t \times d} \end{bmatrix}.
\end{align*}
Note, we consider $\mathbf{t}^{(1)}=0$. We solve problem \eqref{eq:MPC_R_fin_trac} utilizing duality of convex programs \cite{ben2009robust}. This is detailed in \ref{duality} in the Appendix. The constraint tightenings in \eqref{eq:MPC_R_fin_trac} used in the robust state constraints are functions of the decision variables. 
This is the key contribution of our proposed approach.

We assume that \eqref{eq:MPC_R_fin_trac} is feasible at time step $t=0$ with $N_0 = N$. For $t \geq 1$, we apply the following policy
\begin{align}\label{eq:cl_control}
    u^\mathrm{MPC}_t(x_t) = \begin{cases} \bar{u}^\star_{t|t},~\textnormal{if \eqref{eq:MPC_R_fin_trac} is feasible,} \\ u^\star_{t|t_\mathrm{f}}(x_t),~\textnormal{otherwise} \end{cases}
\end{align}
to system \eqref{eq:unc_system}, where $t_\mathrm{f} \in \{0,1,\dots, N-1\}$ is the latest time step where \eqref{eq:MPC_R_fin_trac} was feasible previously. Thus, the time-shifted optimal policy from a previous time step is utilized as a safe backup, in case \eqref{eq:MPC_R_fin_trac} loses feasibility. As we cannot measure $w_t$ due to the presence of matrix uncertainties in \eqref{eq:unc_system}, see \cite[Section~5]{Goulart2006} for how to obtain the backup policy in state feedback form required for implementation. We then resolve \eqref{eq:MPC_R_fin_trac} at the next time step $(t+1)$ for horizon lengths $N_{t+1}$ obtained from \eqref{eq:n_t}. The control algorithm is summarized in Algorithm~\ref{alg2}.
\begin{algorithm}
\begin{algorithmic}
\State{\textbf{Inputs:}} $x_t, N, \mathbb{W}, \mathcal{X}_N, \mathbf{t}^{(N_t)}, \forall N_t \in \{2,3,\dots, N\}$
\State{\textbf{Initialize:}} $t=0$ 
 \While{$t \geq 0$}
      \State{Set horizon length $N_t$ from \eqref{eq:n_t};}
      \State{Solve MPC problem \eqref{eq:MPC_R_fin_trac};} 
      \State{Apply closed-loop input \eqref{eq:cl_control} to \eqref{eq:unc_system};}
      \State{Set $t = t+1$};
 \EndWhile
 \State{\textbf{end}}
\end{algorithmic}
\caption{Robust MPC for Linear Systems with Parametric and Additive Uncertainty}
\label{alg2}
\end{algorithm}
\begin{remark}\label{rem:ver_enum}
Recall \eqref{err_in_sets_pol}--\eqref{eq:pol_out_termset_cond}. For time invariant $\Delta_A^\mathrm{tr}$ and $\Delta_B^\mathrm{tr}$ one may also efficiently enumerate all possible vertex sequences of $\mathbf{\Delta}_A$ and $\mathbf{\Delta}_B$ for robustifying the term $\mathbf{F}^x_i \bar{\mathbf{A}}_1 \mathbf{\Delta}_A  \bar{\mathbf{x}}^{(N_t)}_t + \mathbf{F}^x_i \bar{\mathbf{A}}_1 \mathbf{\Delta}_B \mathbf{\bar{u}}^{(N_t)}_t$ in \eqref{eq:fin_state_con} (with policy \eqref{eq:inputParam_DF_OL}). This partially replaces the bounds \eqref{eq:deladelb_bounds} to lower conservatism. As we use the backup policy in \eqref{eq:cl_control} without requiring recursive feasibility of~\eqref{eq:MPC_R_fin_trac}, the number of such sequences is limited to the number of vertices characterizing the uncertain matrices (i.e., each vertex repeated $N_t$ times along the horizon), and is not combinatorial. See~\cite[Figure 3]{scokaert1998min} for further insights into why combinatorial enumerations are required otherwise. 
\end{remark}
\section{Robust Constraint Satisfaction and Stability}\label{sec:feas_n_stab}
We first prove the robust satisfaction of constraints \eqref{eq:FTOCP_constr} for the closed-loop system \eqref{eq:unc_system} and \eqref{eq:cl_control}. Afterwards, we show the stability properties of the proposed robust MPC in Algorithm~\ref{alg2}. 
\subsection{Feasibility of Robust Constraints}
\begin{theorem}\label{thm1}
Let optimization problem \eqref{eq:MPC_R_fin_trac} with tightened constraints \eqref{tight1} be feasible at time step $t=0$ for $N_t = N$, where the bounds $\{\mathbf{t}^{(N_t)}_{w}, \mathbf{t}^{(N_t)}_{1}, \mathbf{t}^{(N_t)}_{2}, \mathbf{t}^{(N_t)}_{3} \}$ are obtained by solving \eqref{eq:bound_mainterm}-\eqref{fourthbound}. Then, the closed-loop system~\eqref{eq:unc_system} and~\eqref{eq:cl_control} robustly satisfies state and input constraints~\eqref{eq:FTOCP_constr}, for all $t\geq 0$.
\end{theorem}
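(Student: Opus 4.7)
The plan is to show constraint satisfaction by an inductive argument on $t$, splitting into two regimes: the shrinking horizon phase $t \in \{0,1,\dots,N-1\}$, and the invariant set phase $t \geq N$. The backbone is the sufficiency of the tightened constraints established in the Appendix (the derivation of \eqref{eq:fin_state_con} from \eqref{ugo_wants_it_state} via H\"older's inequality, triangle inequality, norm consistency and submultiplicativity), which guarantees that any feasible solution of~\eqref{eq:MPC_R_fin_trac} induces a feedback policy under which the \emph{realized} state and input trajectories satisfy the original constraints~\eqref{eq:FTOCP_constr}--\eqref{FTOCP_termC} for every admissible uncertainty realization in~\eqref{eq:FTOCP_uncertanty}.

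First I would formalize the following claim: if~\eqref{eq:MPC_R_fin_trac} is feasible at some time step $t_f \in \{0,1,\dots,N-1\}$ with horizon $N_{t_f} = N - t_f$, then applying the policy~\eqref{eq:inputParam_DF_OL} with the optimizer $(\mathbf{M}^{\star,(N_{t_f})}_{t_f}, \bar{\mathbf{u}}^{\star,(N_{t_f})}_{t_f})$ to system~\eqref{eq:unc_system} yields $H^x x_k \leq h^x$ and $H^u u_k \leq h^u$ for every $k \in \{t_f,\dots,t_f+N_{t_f}-1\}$ and every uncertainty realization, together with the terminal inclusion $x_{t_f+N_{t_f}} = x_N \in \mathcal{X}_N$. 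This is exactly what~\eqref{Ng1_con}--\eqref{N1_con}--\eqref{eq:input_robcon} enforce, read together with the sufficiency chain~\eqref{ugo_wants_it_state}$\Leftarrow$\eqref{eq:fin_state_con}; here I would simply point to the Appendix and the role of $\{\mathbf{t}^{(N_t)}_w,\mathbf{t}^{(N_t)}_1,\mathbf{t}^{(N_t)}_2,\mathbf{t}^{(N_t)}_3\}$.

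Next I would handle the shrinking horizon phase. By hypothesis, \eqref{eq:MPC_R_fin_trac} is feasible at $t=0$ with $N_0 = N$, so $t_f = 0$ provides a valid backup policy covering every $t \in \{0,1,\dots,N-1\}$ by the claim above. For the closed-loop law~\eqref{eq:cl_control}, at each $t$ in this range there are two cases: either \eqref{eq:MPC_R_fin_trac} is feasible at $t$ and $u^\mathrm{MPC}_t = \bar{u}^\star_{t|t}$ coincides with the first input of a fresh robust plan whose tightened constraints imply $H^x x_{t+1} \leq h^x$ and $H^u u^\mathrm{MPC}_t \leq h^u$ for all admissible uncertainties; or \eqref{eq:MPC_R_fin_trac} is infeasible at $t$, in which case~\eqref{eq:cl_control} applies $u^\star_{t|t_\mathrm{f}}(x_t)$, i.e., the $t$-th element of the most recent robust plan, which by the preceding paragraph still enforces the original state and input constraints at time $t$ and maintains $x_N \in \mathcal{X}_N$.

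Finally, for the invariant set phase $t \geq N$, I would argue by induction that $x_t \in \mathcal{X}_N$ and that~\eqref{eq:MPC_R_fin_trac} with $N_t = 1$ is feasible with the trivial candidate $\bar{u}_{t|t} = K x_t$. The base case $x_N \in \mathcal{X}_N$ follows from the preceding phase. For the inductive step, Assumption~\ref{assump:stable} and the definition of $\mathcal{X}_N$ in~\eqref{eq:term_set_DF} give $(A_m + B_m K) x_t + w_t \in \mathcal{X}_N$ for all $A_m \in \mathcal{P}_{A_\Delta}$, $B_m \in \mathcal{P}_{B_\Delta}$ and $w_t \in \mathbb{W}$, together with $H^x x_t \leq h^x$ and $H^u K x_t \leq h^u$; hence~\eqref{N1_con} and~\eqref{eq:input_robcon} hold for the candidate, so~\eqref{eq:MPC_R_fin_trac} is feasible, and the realized $x_{t+1}$ lies in $\mathcal{X}_N$ regardless of which feasible $\bar{u}^\star_{t|t}$ the solver returns (a brief argument is needed here: the terminal constraint at $N_t=1$ \emph{is} the next-state constraint, so any feasible input drives $x_{t+1}$ into $\mathcal{X}_N$). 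Combining the two phases yields robust satisfaction of~\eqref{eq:FTOCP_constr} for all $t \geq 0$.

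The main obstacle I anticipate is the bookkeeping in the shrinking horizon phase, specifically making precise that a disturbance feedback policy planned at $t_f$ remains admissible and constraint-preserving at a later $t$ despite the actual $\{w_k\}_{k=t_f}^{t-1}$ being arbitrary in $\mathbb{W}$; this requires invoking the Appendix derivation robustly over \emph{all} disturbance sequences, and the reader must be convinced that~\eqref{eq:cl_control} implicitly encodes the state feedback realization of the disturbance feedback plan as discussed in \cite[Section~5]{Goulart2006}. Once this equivalence is stated explicitly, the rest of the argument reduces to invoking~\eqref{eq:state_robcon}--\eqref{eq:input_robcon} and the invariance of $\mathcal{X}_N$.
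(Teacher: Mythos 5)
Your proposal is correct and follows essentially the same route as the paper's proof: the $t=0$ solution serves as a robust backup policy covering the shrinking-horizon phase $t \le N-1$ (with the state-feedback realization of the disturbance-feedback plan as in \cite[Section~5]{Goulart2006}), and for $t \ge N$ feasibility and invariance follow inductively because the $N_t=1$ constraint \eqref{N1_con} is an exact robustification matching the construction of $\mathcal{X}_N$ in \eqref{eq:term_set_DF}, so $Kx_t$ is always a feasible candidate. The only cosmetic difference is where you split the two phases ($t=N$ versus the paper's $t=N-1$), which does not change the argument.
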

\begin{proof}
See \ref{ProofThe1} in the Appendix.  
\end{proof}
\subsection{Stability}\label{roa_section}
To prove stability of the origin for system~\eqref{eq:unc_system} in closed-loop with the MPC control law~\eqref{eq:cl_control}, we first introduce the following set of assumptions and definitions.
\begin{assumption}\label{assump:orig_in}
Denote the state and input constraints in \eqref{eq:FTOCP_constr} as $\mathcal{X}$ and $\mathcal{U}$. We assume the convex-compact sets $\mathcal{X}, \mathcal{U}$ and $\mathbb{W}$ contain the origin in their interior. 
\end{assumption}
\begin{definition}[Robust Precursor Set]\label{def:preset}
Given a control policy $\pi(\cdot)$ and the closed-loop system $x_{t+1} = A x_t + B\pi(x_t) + w_t$ with $w_t \in \mathbb{W}$ for all $t\geq 0$, we denote the robust precursor set to the set $\mathcal{S}$ under a policy $\pi(\cdot)$ as
\begin{align}\label{eq:pre_set_eq}
\mathrm{Pre}(\mathcal{S},A,B,\mathbb{W}, \pi(\cdot)) = \{x \in \mathbb{R}^d: Ax + B\pi(x) + w \in \mathcal{S}, \forall w \in \mathbb{W}\}.
\end{align}
$\mathrm{Pre}(\mathcal{S},A,B,\mathbb{W}, \pi(\cdot))$ defines the set of states of the system $x_{t+1} = A x_t + B\pi(x_t) + w_t$, which evolve into the target set $\mathcal{S}$ in one time step for all $w_t \in \mathbb{W}$.
\end{definition}
\begin{definition}[$N$-Step Robust Controllable Set]\label{def:RobustPre} Given a control policy $\pi(\cdot)$ and the closed-loop system $x_{t+1} = A x_t + B\pi(x_t) + w_t$, we recursively define the $N$-Step Robust Controllable set to the set $\mathcal{S}$ as 
\begin{equation*}
\begin{aligned}
    & \mathcal{C}_{t\rightarrow t+k+1}(\mathcal{S}) = \mathrm{Pre}(\mathcal{C}_{t\rightarrow t+k}(\mathcal{S}), A, B, \mathbb{W}, \pi(\cdot)) \cap \mathcal{X},\\[1ex]
    & \textnormal{with } \mathcal{C}_{t\rightarrow t}(\mathcal{S})=\mathcal{S},
\end{aligned}
\end{equation*}
for $k \in \{0, 1, \dots, N-1 \}$. 
\end{definition}
\noindent The $N$-Step Robust Controllable set $\mathcal{C}_{t\rightarrow t+N}(\mathcal{S})$ collects the states satisfying the state constraints which can be steered to the set $\mathcal{S}$ in $N$ steps under the policy $\pi(\cdot)$. 
\begin{assumption}\label{assump:stagecost} 
The matrices $P$ and $R$ defining the stage cost $\ell(x, u) = x^\top P x + u^\top R u$ satisfy $P\succ0$, $R\succ0$.
\end{assumption}
\begin{assumption}\label{assump: termcost}
The matrix $P_N$ which defines the terminal cost in \eqref{eq:MPC_R_fin_trac} is chosen as
a matrix $P_N \succ 0$ satisfies 
\begin{align}
 & x^\top \Big (-P_N + (P+K^\top R K) + \bar{A}_\mathrm{cl}^\top P_N \bar{A}_\mathrm{cl} \Big ) x \leq 0,~\forall x \in \mathcal{X}_N \nonumber,
\end{align}
where $\bar{A}_\mathrm{cl} = \bar{A} + \bar{B}K$.
\end{assumption}
\begin{definition}[Input to State Stability (ISS) \cite{lin1995various}]
Consider \eqref{eq:unc_system} in closed-loop with the MPC law \eqref{eq:cl_control}, obtained from \eqref{eq:MPC_R_fin_trac} with tightened constraints \eqref{tight1}:
\begin{align}\label{eq:cl_loop_system}
    x_{t+1} = Ax_t + B{u}^\mathrm{MPC}_t(x_t) + w_t,~\forall t\geq 0.
\end{align}
We say that the origin of the closed-loop system \eqref{eq:cl_loop_system} is ISS in $\mathcal{X}_N$ if for all $\Vert \tilde{w}_t\Vert_\infty \leq \tilde{w}_\mathrm{max}$, $t \geq 0$, $x_0 \in \mathcal{X}_N$
\begin{equation*}
    \Vert x_{t+1}\Vert \leq \beta(\Vert x_0\Vert, t+1) + \gamma \big( \Vert \tilde{w}_i\Vert_{\mathcal{L}_\infty} \big),
\end{equation*}
where $\tilde{w}_i = \Delta^\mathrm{tr}_A x_i + \Delta^\mathrm{tr}_B u_i + w_i$, $\Vert \tilde{w}_i \Vert_{\mathcal{L}_\infty} = \sup_{i \in \{0,\dots,t\}}\Vert \tilde{w}_i \Vert$, and  $\beta(\cdot, \cdot)$ and $\gamma(\cdot)$ are class-$\mathcal{KL}$ and class-$\mathcal{K}$ functions.
\end{definition}
\begin{definition}[ISS Lyapunov Function \cite{lin1995various}]\label{iss_lyap}
Consider the closed-loop system in \eqref{eq:cl_loop_system}. Then the origin is ISS in $\mathcal{X}_N$, if there exists class-$\mathcal{K}_\infty$ functions $\alpha_1(\cdot)$, $\alpha_2(\cdot)$, $\alpha_3(\cdot)$, a class-$\mathcal{K}$ function $\sigma(\cdot)$ and a function $V(\cdot): \mathbb{R}^d \mapsto \mathbb{R}_{\geq 0}$ continuous in $\mathcal{X}_N$, such that, 
\begin{align*}
    & \alpha_1(\Vert x \Vert ) \leq V(x) \leq \alpha_2(\Vert x \Vert ),~\forall x \in \mathcal{X}_N,\\
    & V(x_{t+1}) - V(x_t) \leq -\alpha_3(\Vert x_t \Vert) + \sigma(\Vert \tilde{w}_i \Vert_{\mathcal{L}_\infty}).
\end{align*}
Function $V(\cdot)$ is called an ISS Lyapunov function for the closed-loop system \eqref{eq:cl_loop_system}.
\end{definition}

\begin{theorem}\label{isstheorem}
Let Assumptions~\ref{assump:stable}-\ref{assump: termcost} hold and let the optimization problem \eqref{eq:MPC_R_fin_trac} be feasible at time step $t=0$ with $N_t = N$. Then, $x_t \in \mathcal{X}_N$ for all $t \geq N$ and the origin of the closed-loop system~\eqref{eq:cl_loop_system} is ISS.
\end{theorem}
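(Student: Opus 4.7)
The plan is to first establish the finite-time attractor property $x_t \in \mathcal{X}_N$ for all $t \geq N$, and then exhibit the MPC optimal cost as an ISS Lyapunov function on $\mathcal{X}_N$ in the sense of Definition~\ref{iss_lyap}. I lean on Theorem~\ref{thm1} for robust feasibility throughout, on the robust invariance of $\mathcal{X}_N$ from~\eqref{eq:term_set_DF} to handle the $N_t=1$ phase, and on Assumptions~\ref{assump:stagecost}--\ref{assump: termcost} to secure the Lyapunov bounds.

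\textbf{Finite-time entry into $\mathcal{X}_N$.} At $t=0$ the MPC uses horizon $N_0=N$, and the last block row of $\mathbf{f}^x_\mathrm{tight}$ in~\eqref{Ng1_con} is precisely the terminal-set constraint on the predicted $N$-step state. Theorem~\ref{thm1} guarantees that the realized $x_N$ lies in $\mathcal{X}_N$ for every admissible $(\Delta_A^\mathrm{tr},\Delta_B^\mathrm{tr},\{w_k\})$, and this remains true even if the backup branch of~\eqref{eq:cl_control} is invoked at some intermediate $t \in \{1,\dots,N-1\}$, since the backup is a time-shifted plan from an earlier feasible solve whose robust guarantees already cover the horizon tail up to time $N$. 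For $t\geq N$ we have $N_t=1$ and $\mathbf{t}^{(1)}=0$, so constraint~\eqref{N1_con} reduces to exactly the one-step robust reachability condition defining $\mathcal{X}_N$ in~\eqref{eq:term_set_DF}. Hence whenever $x_t \in \mathcal{X}_N$, the choice $(\bar{u}_{t|t}, \mathbf{M}^{(1)}_t) = (K x_t, 0)$ is feasible by Assumption~\ref{assump:stable}, so~\eqref{eq:MPC_R_fin_trac} is feasible, the backup is never triggered, and the robust terminal constraint forces $x_{t+1} \in \mathcal{X}_N$. An induction on $t \geq N$ completes the first claim.

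\textbf{ISS Lyapunov function.} For $x \in \mathcal{X}_N$ define the candidate $V(x) := V^{\mathrm{MPC}}_{t \to t+1}(x, \mathbf{t}^{(1)}, 1)$. The lower bound $V(x) \geq \lambda_{\min}(P)\|x\|^2$ is immediate from $P \succ 0$. For the upper bound, using the suboptimal $\bar{u}_{t|t} = K x$ gives $\bar{x}_{t+1|t} = \bar{A}_{\mathrm{cl}} x$ and Assumption~\ref{assump: termcost} yields
\begin{equation*}
V(x) \leq x^\top \bigl(P + K^\top R K + \bar{A}_{\mathrm{cl}}^\top P_N \bar{A}_{\mathrm{cl}}\bigr) x \leq x^\top P_N x \leq \lambda_{\max}(P_N)\|x\|^2.
\end{equation*}
For the descent condition, let $(\bar{u}_{t|t}^\star, \bar{x}_{t+1|t}^\star)$ denote the optimizers at time $t$, so that the realized state obeys $x_{t+1} = \bar{x}_{t+1|t}^\star + \tilde{w}_t$ with $\tilde{w}_t = \Delta_A^\mathrm{tr} x_t + \Delta_B^\mathrm{tr} \bar{u}_{t|t}^\star + w_t$. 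Plugging in the suboptimal $\bar{u}_{t+1|t+1} = K x_{t+1}$ at $t+1$ (feasible by the first claim) and invoking Assumption~\ref{assump: termcost} again gives $V(x_{t+1}) \leq x_{t+1}^\top P_N x_{t+1}$. Expanding this and subtracting
\begin{equation*}
V(x_t) = x_t^\top P x_t + (\bar{u}_{t|t}^\star)^\top R\, \bar{u}_{t|t}^\star + (\bar{x}_{t+1|t}^\star)^\top P_N \bar{x}_{t+1|t}^\star
\end{equation*}
leads to
\begin{equation*}
V(x_{t+1}) - V(x_t) \leq -\lambda_{\min}(P)\|x_t\|^2 + 2\tilde{w}_t^\top P_N \bar{x}_{t+1|t}^\star + \tilde{w}_t^\top P_N \tilde{w}_t.
\end{equation*}
The main obstacle is packaging the cross term $2\tilde{w}_t^\top P_N \bar{x}_{t+1|t}^\star$ as a class-$\mathcal{K}$ function of $\|\tilde{w}_t\|$; this is handled by exploiting compactness of $\mathcal{X}_N$ to obtain a uniform bound $\|\bar{x}_{t+1|t}^\star\| \leq C$. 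Then $\alpha_3(s) = \lambda_{\min}(P) s^2 \in \mathcal{K}_\infty$ and $\sigma(s) = 2C\lambda_{\max}(P_N) s + \lambda_{\max}(P_N) s^2 \in \mathcal{K}$ satisfy the three bounds in Definition~\ref{iss_lyap}, and ISS of the origin on $\mathcal{X}_N$ follows.
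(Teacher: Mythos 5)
Your proposal is correct and follows essentially the same route as the paper: finite-time entry into $\mathcal{X}_N$ via the shrinking-horizon/backup argument of Theorem~\ref{thm1} (Case 2), followed by using the one-step value function $V^{\mathrm{MPC}}_{t\to t+1}(\cdot,0,1)$ as an ISS Lyapunov function on $\mathcal{X}_N$ under Assumptions~\ref{assump:stagecost}--\ref{assump: termcost} --- the only difference being that you explicitly carry out the Lyapunov computation (upper bound via the candidate $Kx$, descent via $x_{t+1}=\bar{x}^\star_{t+1|t}+\tilde{w}_t$, cross term bounded by compactness) that the paper delegates to the cited standard proof of Goulart et al. The one detail you omit is continuity of $V$ on $\mathcal{X}_N$, required by Definition~\ref{iss_lyap}, which the paper obtains from the parametric-QP structure of \eqref{eq:MPC_R_fin_trac} for $N_t=1$.
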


\begin{proof}
See \ref{ProofISS} in the Appendix.
\end{proof}

\section{The ROA and Its Inner Approximation} \label{sec:roa_sec}
We define the Region of Attraction (ROA) for Algorithm~\ref{alg2}, denoted by $\mathcal{R}$, as the $N$-Step Robust Controllable Set to the terminal set $\mathcal{X}_N$ under the policy \eqref{eq:cl_control} for $t = 0$. This ensures that from Theorem~\ref{thm1} and Theorem~\ref{isstheorem} we have $\forall w_t \in \mathbb{W}$:
\begin{align*}
    x_0 \in \mathcal{R} \implies & \begin{cases} x_t \in \mathcal{X},~\forall t \geq 0,~\textnormal{and}\\
     x_t \in \mathcal{X}_N \subseteq \mathcal{X},~\forall t \geq N, \end{cases}
\end{align*}
where $x_{t+1} = Ax_t + B{u}^\mathrm{MPC}_t(x_t) + w_t$ for all $t \geq 0$. Thus, all the initial states in the ROA are steered to the terminal set $\mathcal{X}_N$ in maximum of $N$-steps while robustly satisfying \eqref{eq:FTOCP_constr}, where the origin of \eqref{eq:cl_loop_system} is ISS. 
The ROA can be computed by solving problem~\eqref{eq:MPC_R_fin_trac} as a parametric optimization problem, with parameter $x_t$~\cite{borrelli2017predictive}. However, this computation may be prohibitive. We therefore use the fact that the ROA is convex and obtain its inner approximation using a set of vectors, following \cite{rosolia2019robust}. Along each vector, we find an initial state for which \eqref{eq:MPC_R_fin_trac} is feasible and which minimizes the inner product with the vector. The ROA is then approximated as the convex hull of these states. This is elaborated below. 

Given a vector $v\in \mathbb{R}^d$, we define the following optimization problem at time step $t=0$:
\begin{equation}\label{eq:regAttApp}
\begin{aligned}
& P(N, v) = \\
& \min_{\substack{x_0, \mathbf{M}^{(N)}_0, \bar{\mathbf{u}}^{(N)}_0\\ \bar{\mathbf{x}}^{N}_0 }} ~~ v^\top  x_0  \\
& ~~~~~~\textrm{s.t.,} ~~ (v^\perp )^\top x_0 =0, \\
& ~~~~~~~~~~~~~~ G^{(N)}_\mathrm{eq}\begin{bmatrix} (\bar{\mathbf{x}}^{(N)}_0)^\top & \bar{x}^\top_{N|0} & (\bar{\mathbf{u}}^{(N)}_0)^\top\end{bmatrix}^\top = b^{(N)}_\mathrm{eq} x_0, \\
& ~~~~~~~~~~~~~~ \bar x_{0|0}=x_0,\\
& ~~~~~~~~~~~~~~ \eqref{eq:state_robcon},\eqref{eq:input_robcon},~\textnormal{(with $N_0 = N$)},\\
\end{aligned}
\end{equation}
with $\mathbf{f}^x_{\mathrm{tight}}$ chosen as per \eqref{tight1}, where $v^\perp \in \mathbb{R}^d$ is a vector perpendicular to $v \in \mathbb{R}^d$.  Therefore, given a user-defined set of vectors $\mathcal{V}=\{v^{(1)},v^{(2)},\dots,v^{(n)}\}$, problem~\eqref{eq:regAttApp} can be solved repeatedly and the convex hull of the optimal initial states $x^\star_0$ is an inner approximation to the ROA. 
\begin{algorithm}[h]
\begin{algorithmic}
	\State{\textbf{Inputs:}} Vectors $\mathcal{V}=\{v^{(1)}, v^{(2)}, \dots, v^{(n)}\}$ and $N$
	\State{\textbf{Initialize:}} $\mathcal{R}_\mathrm{ap} = \varnothing$ 
	\For{$v^{(i)} \in \mathcal{V}$}
	
  	\State{Solve $P(N, v^{(i)})$ from~\eqref{eq:regAttApp}. Let $x_0^\star$ be the optimal initial state from $P(N, v^{(i)})$.
	
	 \State{Set $\mathcal{R}_\mathrm{ap} = \text{conv}\{\mathcal{R}_\mathrm{ap} \cup \{x_0^\star\}\}$}.}

	\EndFor 
	\State \textbf{end}

	\State\textbf{Output:} Approximate ROA: $\mathcal{R}_\mathrm{ap} \subseteq \mathcal{R}$.

\end{algorithmic}
\caption{Approximate ROA}
\label{RegAttrAppr}
\end{algorithm}
It is clear from Algorithm~\ref{RegAttrAppr} that the ROA approximation can improve, as the number of vectors in $\mathcal{V}$ increases. 
\section{Numerical Simulations}\label{sec:numerics}
We present our numerical simulations in this section (\href{https://github.com/monimoyb/RMPCPy}{Link to GitHub Repository}). Algorithm~\ref{alg2} is implemented with $N= 3$ and $N_t$ chosen as per \eqref{eq:n_t} for all $t \geq 0$. We compare the performance of our Algorithm~\ref{alg2} with that of the finite dimensional constrained LQR algorithm of \cite[Section~2.3]{dean2018safely}, and also with a tube MPC of \cite[Section~5]{langson2004robust}.
For our comparisons, we compute approximate MPC solutions to the problem:
\begin{equation}\label{eq:generalized_InfOCP_ex}
	\begin{array}{llll}
		\hspace{0cm}    
			\hspace{0cm} 
	\displaystyle\min_{u_0,u_1(\cdot),\ldots} & \displaystyle\sum\limits_{t\geq 0} 10\left \| \bar{x}_t \right\|^2_2 + 2 \left\| u_t(\bar{x}_t) \right\|^2_2   \\[1ex]
	\hspace{5mm}	\text{s.t.,} & x_{t+1} = Ax_t + Bu_t(x_t) + w_t, \\
	&\textnormal{with } A = \bar{A} + \Delta_A,~B = \bar{B} + \Delta_B, \\
	& \bar{x}_{t+1} = \bar{A}\bar{x}_t + \bar{B}u_t(\bar{x}_t),\\
	& \begin{bmatrix} -8 \\ -8 \\ -4
	\end{bmatrix} \leq \begin{bmatrix}x_t \\ u_t(x_t)
	\end{bmatrix} \leq \begin{bmatrix}8 \\ 8 \\ 4
	\end{bmatrix},\\[3.5ex]
	& \forall w_t \in \mathbb{W},~\forall \Delta_A \in \mathcal{P}_A,~\forall \Delta_B \in \mathcal{P}_B, \\
	& x_0 = x_S,~t=0,1,\ldots,
	\end{array}
\end{equation}
with disturbance set $\mathbb{W} = \{w: \Vert w \Vert_\infty \leq 0.1\}$, where 
\begin{align*}
\bar{A} = \begin{bmatrix} 1 & 0.15\\
0.1 & 1\end{bmatrix},~\bar{B} = \begin{bmatrix} 0.1\\ 1.1 \end{bmatrix},~{A} = \begin{bmatrix} 1 & 0.05\\
0 & 1 \end{bmatrix},~{B} = \begin{bmatrix} 0\\ 1.1 \end{bmatrix}.
\end{align*}
For solving \eqref{eq:generalized_InfOCP_ex}we consider the uncertainty sets
\begin{align*}
    & \mathcal{P}_A = \mathrm{conv} \Big (\begin{bmatrix} 0& \pm 0.1 \\ \pm 0.1 & 0 \end{bmatrix} \Big ),~\textnormal{(4 matrices)}\\
    & \mathcal{P}_B = \mathrm{conv} \Big (\begin{bmatrix} 0 \\ \pm 0.1 \end{bmatrix}, \begin{bmatrix} \pm 0.1 \\ 0 \end{bmatrix} \Big )~\textnormal{(4 matrices)} \nonumber.
\end{align*}
That is, we consider uncertainty in only the off-diagonal terms of $\bar{A}$, assuming that the diagonal terms are known. The equivalent uncertainty sets $\Delta_A \in \Phi_{A, \infty}$ and $\Delta_B \in \Phi_{B, \infty}$ considered in \cite{dean2018safely} are given by 
\begin{align*}
    & \Phi_{A,\infty} = \{\phi \in \mathbb{R}^{2 \times 2}: \max_{x \neq 0} \frac{\Vert\phi x\Vert_\infty}{\Vert x\Vert_\infty} \leq 0.1\},\\
    &\Phi_{B,\infty} = \{\phi \in \mathbb{R}^{2 \times 1}: \max_{x \neq 0} \frac{\Vert\phi x\Vert_\infty}{\Vert x\Vert_\infty} \leq 0.1\}.
\end{align*}
For this example, we utilize Remark~\ref{rem:ver_enum}. Gain $K$ for constructing the terminal set $\mathcal{X}_N$ is chosen as $K = -[0.452, 0.418]$. The fixed point iteration algorithm computing $\mathcal{X}_N$ converges in 7 iterations.

\subsection{Comparison with Tube MPC \cite{langson2004robust}}
For this comparison, we choose a horizon of 5 for the tube MPC method in \cite[Section~5]{langson2004robust}. The tube cross section parameter $Z$ is chosen as the minimal robust positive invariant set \cite[Definition~3.4]{kouvaritakis2016model} for system \eqref{eq:unc_system} under a feedback $u = -[1.2604, 0.7036]x$, and the terminal set $\mathcal{X}_f$ is chosen as our terminal set $\mathcal{X}_N$ constructed with \eqref{eq:term_set_DF}. See \cite{langson2004robust} for details on these quantities.
\begin{figure}[h!]
\centering\textcolor{yellow}{\textbf{$\blacksquare$}} Algorithm~\ref{alg2} ~ \textcolor{gray}{\textbf{$\blacksquare$}} Tube MPC ~ \textcolor{blue}{\textbf{$\blacksquare$}} Constrained LQR \\[0.2cm]
	\centering
	\includegraphics[width=12cm]{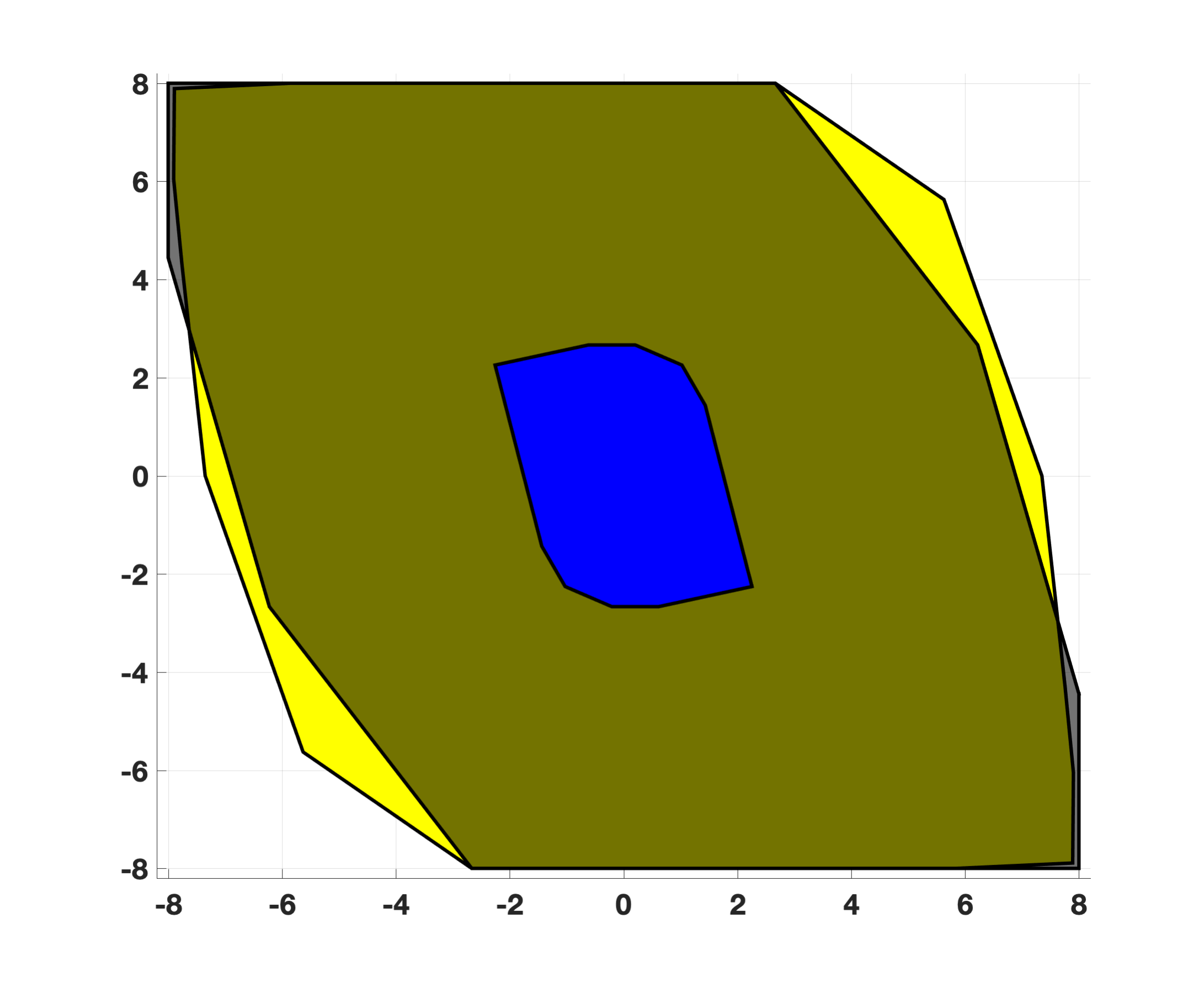}
	\caption{Comparison of the Approximate Region of Attraction of Algorithm~\ref{alg2} and the convex hull of the feasible initial state samples with tube MPC in \cite[Section~5]{langson2004robust} and constrained LQR in \cite[Section~2.3]{dean2018safely}. 
	}
	\label{fig:init_sample_slsTube}    
\end{figure}
Recall the notion of the ROA of Algorithm~\ref{alg2} from Section~\ref{sec:roa_sec} and also its inner approximation obtained from Algorithm~\ref{RegAttrAppr}. We now choose a set of $N_\mathrm{init} = 100$ initial states $x_S$, created by a $10 \times 10$ uniformly spaced grid of the set of state constraints in \eqref{eq:generalized_InfOCP_ex}. From  each of these initial state samples we check the feasibility of the tube MPC control problem in \cite[Section~5]{langson2004robust}. The source code to solve the tube MPC is adapted from \cite{slscode}. 
The convex hull of the feasible initial state samples, which inner approximates its ROA, is then compared to the approximate ROA of Algorithm~\ref{alg2}. This comparison is shown in Fig.~\ref{fig:init_sample_slsTube}. The approximate ROA of Algorithm~\ref{alg2} is about 1.04x in volume of that of the tube MPC in \cite[Section~5]{langson2004robust}. The advantage of our approach becomes clearer in Table~\ref{tab:compTime}.  
\begin{table}[h!]
\caption{Average computation times [sec] comparison. Values are obtained with a MacBook Pro 16inch, 2019, 2.3 GHz 8-Core Intel Core i9, 16 GB memory, using the Gurobi solver \cite{gurobi2015gurobi}.}
\label{tab:compTime}
  \begin{center} 
  \begin{tabular}{|c|c|c|c|}
    \hline
   \multirow{2}{4em}{\textbf{Horizon}} & \multicolumn{2}{c|}{\textbf{Algorithm~\ref{alg2}}} & {\textbf{Tube MPC in \cite{langson2004robust}}}\\
    \cline{2-4}
    & \textbf{online} & \textbf{offline} & \textbf{online}\\
    \hline
     $N_t=1$   &  0.0019  & 0 &  0.0054\\
     \hline
     $N_t=2$   &  0.0058  & 0.0279 &  0.1042 \\ \hline
     $N_t=3$   &  0.0111  & 0.0687 & 0.2057 \\ \hline
  \end{tabular}
  \end{center}
\end{table}

We see from Table~\ref{tab:compTime} that for all relevant horizon lengths $N_t \in \{1,2,3\}$, solving \eqref{eq:MPC_R_fin_trac} is cheaper than computing the tube MPC online, even after adding the offline computation times required for bounds \eqref{eq:bound_mainterm}-\eqref{fourthbound}.

\begin{remark}
Tube MPC methods such as \cite{langson2004robust, munoz2013recursively, rakovic2013homothetic, fleming2014robust} also require offline matrix/set computations before online control design. See  \cite[Chapter~5]{kouvaritakis2016model} for further details. In the considered example, computing the set $Z$ for the tube MPC in \cite{langson2004robust} required about 49 seconds offline. However, we have chosen not to include this in the comparison in Table~\ref{tab:compTime}, as any alternative simpler choice of $Z$ is also valid. The choice of $Z$ affects the ROA \cite{slsmpc}.   
\end{remark}

\subsection{Comparison with Constrained LQR \cite{dean2018safely}}
Using the same 100 initial state samples, we now check the feasibility of the constrained LQR synthesis problem in \cite[Section~2.3]{dean2018safely}. 
We run all the simulations for an FIR length (same as control horizon length) of $L=15$. The values of parameters for constraint tightenings are chosen as $\tau = 0.99$ and $\tau_\infty = 0.2$ after a grid search. See \cite[Problem~2.8]{dean2018safely} for further details on these parameters. 
The convex hull of the feasible initial state samples with the algorithm of \cite[Section~2.3]{dean2018safely}, which inner approximates its ROA, is about $12$ times smaller in volume and is a \emph{subset} of the approximate ROA of Algorithm~\ref{alg2}, as seen in Fig.~\ref{fig:init_sample_slsTube}. Furthermore, as \cite{dean2018safely} does not solve any optimization problem for control synthesis for $t\geq 1$, we highlight that this gain in ROA volume can also be obtained with an open-loop policy given by: 
\begin{align}\label{sarahpol_backup}
  \Pi^\mathrm{safe}_\mathrm{ol}(x_t) = \begin{cases} u^\star_{t|0}(x_t), &\mbox{if } t \leq ({N}-1), \\
    Kx_t, & \textnormal{otherwise}. \end{cases}
\end{align}
System \eqref{eq:unc_system} with \eqref{sarahpol_backup} maintains robust satisfaction of \eqref{eq:FTOCP_constr} for all time steps, without re-solving \eqref{eq:MPC_R_fin_trac} after $t=0$. Moreover, the one-time control computation times with \cite{dean2018safely} are comparable. We omit showing these values due to the difference in programming languages.


\section{Conclusions}
We proposed a novel approach to design a robust Model Predictive Controller (MPC) for constrained uncertain linear systems. The uncertainty considered included both mismatch in the system dynamics matrices, and additive disturbance. 
The proposed MPC achieved robust satisfaction of the imposed state and input constraints for all realizations of the model uncertainty.
We further proved Input to State Stability of the origin. With numerical simulations, we demonstrated that our controller obtained at least 3x and up to 20x speedup in online control computations and an approximately $4\%$ larger ROA by volume, compared to the tube MPC in \cite{langson2004robust}. We also demonstrated an approximately 12x decrease in conservatism over the constrained LQR algorithm of \cite{dean2018safely} using a safe open-loop policy.
\section*{Acknowledgements}
We thank Sarah Dean for the source code of the constrained LQR method. This project has received funding from the European Union’s Horizon 2020 research and innovation programme under the Marie Sk\l{}odowska-Curie grant agreement No 846421. This work was also funded by ONR-N00014-18-1-2833, and NSF-1931853. 
\bibliographystyle{plain}
\bibliography{bibliography.bib}
\appendix
\section{Appendix}
\subsection{Matrix Definitions}\label{A1}
The prediction dynamics matrices $\mathbf{A}^x, \mathbf{A}^u, \mathbf{A}^{\Delta u}$ and $\mathbf{A}^w$ in \eqref{eq:state_propagation} for a horizon length\footnote{Equation \eqref{eq:state_propagation} was introduced with a fixed horizon length of $N$, i.e., $\bar{N} \leftarrow N$. However, dimensions of these matrices vary as horizon length is varied later in Section~\ref{ssec:mpc_problem}.} of $\bar{N}$ are given by
\begin{align*}
    \mathbf{A}^x & = \begin{bmatrix}
    A_\Delta & 0 & 0 & \dots & 0\\
    A_\Delta\Delta_A & A_\Delta & 0 & \dots & 0\\
    A_\Delta^2\Delta_A & A_\Delta\Delta_A & A_\Delta & \dots & 0\\
    \vdots & \vdots & \vdots
    & \ddots & \vdots\\
    A_\Delta^{\bar{N}-1}\Delta_A & A_\Delta^{\bar{N}-2}\Delta_A & \dots & \dots & A_\Delta
    \end{bmatrix} \in \mathbb{R}^{d\bar{N} \times d\bar{N}},\\
    \mathbf{A}^u & = \begin{bmatrix}
    B_\Delta & 0 & 0 & \dots & 0\\
    A_\Delta \Delta_B & B_\Delta & 0 & \dots & 0 \\
    A_\Delta^2\Delta_B & A_\Delta\Delta_B & B_\Delta & \dots & 0 \\
    \vdots & \vdots & \vdots & \ddots & \vdots\\
    A_\Delta^{\bar{N}-1}
    \Delta_B & A_\Delta^{\bar{N}-2}
    \Delta_B & \dots & \dots & B_\Delta
    \end{bmatrix} \in \mathbb{R}^{d\bar{N} \times m\bar{N}},\\
    \mathbf{A}^{\Delta u} & = \begin{bmatrix} 0 & 0 & 0 & \dots & 0   \\ A_\Delta \bar{B} & 0 & 0 & \dots & 0 \\ A_\Delta^2\bar{B} & A_\Delta\bar{B}  & 0 & \dots & 0\\
    \vdots & \vdots & \vdots & \ddots & \vdots \\
    A_\Delta^{\bar{N}-1} \bar{B} & A_\Delta^{\bar{N}-2} \bar{B} & \dots & A_\Delta\bar{B} & 0 \end{bmatrix} \in \mathbb{R}^{d\bar{N} \times m\bar{N}},\\
    \mathbf{A}^ w & =     \begin{bmatrix}
    I_d & 0 & 0 & \dots & 0 \\
    A_\Delta & I_d & 0 & \dots & 0\\
    A_\Delta^2 & A_\Delta & I_d & \dots & 0\\
    \vdots & \vdots & \vdots & \ddots & \vdots \\
    A_\Delta^{\bar{N}-1} &     A_\Delta^{\bar{N}-2} & \dots & \dots & I_d 
    \end{bmatrix} \in \mathbb{R}^{d\bar{N} \times d\bar{N}},
\end{align*}
where $A_\Delta = (\bar{A} + \Delta_A) \in \mathcal{P}_{A_\Delta}$ and $B_\Delta = (\bar{B} + \Delta_B) \in \mathcal{P}_{B_\Delta}$. We write matrices $\bar{\mathbf{A}}_1$ and $\mathbf{A}_\delta \in \mathbb{R}^{d\bar{N} \times d\bar{N}}$ as:
\begin{align*}
   & \bar{\mathbf{A}}_1 = \begin{bmatrix} I_d & 0 & 0 & \dots & 0\\ \bar{A} & I_d & 0 & \dots & 0 \\ \bar{A}^2 & \bar{A} & I_d & \dots & 0 \\ \vdots & \vdots & \vdots & \ddots & \vdots \\ \bar{A}^{\bar{N}-1} & \bar{A}^{\bar{N}-2} & \dots & \dots & I_d \end{bmatrix},~ \mathbf{A}_\delta = (\mathbf{A}^w - \bar{\mathbf{A}}_1),
\end{align*}
which gives $\mathbf{A}^x =  \bar{\mathbf{A}} + \Big (\bar{\mathbf{A}}_1 + \mathbf{A}_\delta \Big ) \mathbf{\Delta}_A, \mathbf{A}^u =  \bar{\mathbf{B}} + \Big ( \bar{\mathbf{A}}_1 + \mathbf{A}_\delta \Big ) \mathbf{\Delta}_B$, and $\mathbf{A}^{\Delta u} = \Big ( \bar{\mathbf{A}}_1 - \mathbf{I}_d + \mathbf{A}_\delta  \Big) \bar{\mathbf{B}}$.
The matrix $\bar{\mathbf{A}}_v$ is written as $\bar{\mathbf{A}}_v = \begin{bmatrix} A^{(1)}_v & A^{(2)}_v & \dots & A^{(\bar{N}-1)}_v \end{bmatrix}$, where matrices $\{A^{(1)}_v, A^{(2)}_v, \dots, A^{(\bar{N}-1)}_v\}$ are given as 
\begin{equation*}
\begin{aligned}
   & A^{(1)}_v = \begin{bmatrix} 0 & 0 & 0 & \dots & 0\\ I_d & 0 & 0 & \dots & 0\\ 0 & I_d & 0 & \dots & 0 \\ \vdots & \vdots & \vdots & \ddots & \vdots \\ 0 & 0 & \dots & I_d & 0 \end{bmatrix},~A^{(2)}_v = \begin{bmatrix} 0 & 0 & 0 & \dots & 0 \\ 0 & 0 & 0 & \dots & 0 \\ I_d & 0 & 0 & \dots & 0 \\ 0 & I_d & 0 & \dots & 0\\ \vdots & \vdots & \vdots & \ddots & \vdots \\ 0 & 0 & I_d & \dots & 0  \end{bmatrix},~\textnormal{and analogously for}~A^{(3)}_v, \dots, A^{(\bar{N}-1)}_v.
\end{aligned}
\end{equation*}
This gives $\mathbf{A}^w  = \mathbf{I}_d + \bar{\mathbf{A}}_v \mathbf{A}_\Delta$, with $\mathbf{I}_d = (I_{\bar{N}} \otimes I_d)$, and 
\begin{align}\label{stacked_matrix}
    \mathbf{A}_\Delta = \begin{bmatrix} I_{\bar{N}} \otimes A_\Delta \\ I_{\bar{N}} \otimes A_\Delta^2 \\ \vdots \\ I_{\bar{N}} \otimes A_\Delta^{\bar{N}-1} \end{bmatrix} \in \mathbb{R}^{d\bar{N}(\bar N-1) \times d \bar N}.
\end{align}
\subsection{Deriving \eqref{eq:fin_state_con} from \eqref{ugo_wants_it_state}}\label{GIVEMEP}
Using \eqref{simplified_dyn_matrices} in \eqref{eq:state_propagation}, constraints \eqref{ugo_wants_it_state} can be written as:
\begin{align}\label{maincon_in_bounding}
& \mathbf{F}^x \Bigg (\bar{\mathbf{A}} \bar{\mathbf{x}}_t + \bar{\mathbf{A}}_1 \mathbf{\Delta}_A \bar{\mathbf{x}}_t  + (\mathbf{A}_\delta \mathbf{\Delta}_A) \bar{\mathbf{x}}_t + \bar{\mathbf{B}} \mathbf{u}_t + \bar{\mathbf{A}}_1 \mathbf{\Delta}_B \mathbf{u}_t +  (\mathbf{A}_\delta \mathbf{\Delta}_B ) \mathbf{u}_t + (\bar{\mathbf{A}}_1 - \mathbf{I}_d + \mathbf{A}_\delta )\bar{\mathbf{B}} \Delta \mathbf{u}_t + \cdots \nonumber \\ 
&~~~~~~~~~~~~~~~~~~~~~~~~~~~~~~~~~~~~~~~~~~~~~~~~~~~~~~~~~~~~~~~~~~~~~~~~~~~~~~~~~~~~~~ + \mathbf{w}_t + \bar{\mathbf{A}}_v \mathbf{A}_\Delta \mathbf{w}_t \Bigg ) \leq \mathbf{f}^x,\\
& \forall \Delta_A \in \mathcal{P}_A,~ \forall \Delta_B \in \mathcal{P}_B,~\forall w_t \in \mathbb{W}. \nonumber 
\end{align}
We obtain an upper bound for the left hand side of inequality \eqref{maincon_in_bounding} row-wise as follows:
\begin{align}\label{eq:ineq_main_state}
& \mathbf{F}^x_i (\bar{\mathbf{A}} \bar{\mathbf{x}}_t + \bar{\mathbf{B}} \mathbf{u}_t + (\bar{\mathbf{A}}_1 - \mathbf{I}_d) \bar{\mathbf{B}} \Delta \mathbf{u}_t + \mathbf{w}_t) + \mathbf{F}^x_i \bar{\mathbf{A}}_1 \mathbf{\Delta}_A \bar{\mathbf{x}}_t + \mathbf{F}^x_i \bar{\mathbf{A}}_1 \mathbf{\Delta}_B \mathbf{u}_t + \mathbf{F}^x_i \mathbf{A}_\delta \mathbf{\Delta}_A \bar{\mathbf{x}}_t + \cdots \nonumber \\
& ~~~~~~~~~~~~~~~~~~~~~~~~~~~~~~~~~~~~~~~~~~~~~~~~~~~ + \mathbf{F}^x_i \mathbf{A}_\delta \mathbf{\Delta}_B \mathbf{u}_t + \mathbf{F}^x_i \mathbf{A}_\delta \bar{\mathbf{B}}  \Delta \mathbf{u}_t + \mathbf{F}^x_i \bar{\mathbf{A}}_v \mathbf{A}_\Delta \mathbf{w}_t, \nonumber \\[2ex]
& \leq \mathbf{F}^x_i (\bar{\mathbf{A}} \bar{\mathbf{x}}_t + \bar{\mathbf{B}} \mathbf{u}_t + (\bar{\mathbf{A}}_1 - \mathbf{I}_d) \bar{\mathbf{B}} \Delta \mathbf{u}_t + \mathbf{w}_t) + \mathbf{F}^x_i \bar{\mathbf{A}}_1 \mathbf{\Delta}_A  \bar{\mathbf{x}}_t + \mathbf{F}^x_i \bar{\mathbf{A}}_1 \mathbf{\Delta}_B \mathbf{u}_t +\Vert \mathbf{F}^x_i \mathbf{A}_\delta \mathbf{\Delta}_A \Vert_* \Vert \bar{\mathbf{x}}_t \Vert + \nonumber \\
&~~~~~~~~~~~~~~~~~~~~~~~~~~~~~~~~~~~~ \cdots + \Vert \mathbf{F}^x_i \mathbf{A}_\delta \mathbf{\Delta}_B \Vert_* \Vert \mathbf{u}_t \Vert + \Vert \mathbf{F}^x_i \mathbf{A}_\delta \bar{\mathbf{B}} \Vert_* \Vert \Delta \mathbf{u}_t \Vert + \Vert \mathbf{F}^x_i \bar{\mathbf{A}}_v \mathbf{A}_\Delta \Vert_* \Vert \mathbf{w}_t \Vert,
\end{align}
for rows $i \in \{1,2,\dots,r(\bar{N}-1) + r_N\}$, where we have used the H{\"o}lder's inequality. Using bounds \eqref{eq:bound_mainterm}-\eqref{fourthbound} in \eqref{eq:ineq_main_state} then yields \eqref{eq:fin_state_con}. Note that in \eqref{eq:fin_state_con} $\bar{N} \leftarrow N$.
\subsection{Bounding Nominal Trajectory Perturbations}\label{sec:bounds}
For any horizon length\footnote{Note, also the bounds in Section~\ref{sec:opt_tight} were introduced with a fixed horizon length of $N$, i.e., $\bar{N} \leftarrow N$.} of $\bar{N} \in \{2,3,\dots, N\}$, we first bound:
\begin{align}\label{adlta}
    & \max_{A_\Delta \in \mathcal{P}_{A_\Delta}} \Vert \mathbf{F}^x_i \mathbf{A}_\delta \Vert_*,~\textnormal{where using \eqref{padelta} we have} \nonumber \\ 
    & \mathbf{A}_\delta = \bar{\mathbf{A}}_v \begin{bmatrix} I_{\bar{N}} \otimes (A_\Delta-\bar{A}) \\ I_{\bar{N}} \otimes (A_\Delta^2 - \bar{A}^2) \\ \vdots \\ I_{\bar{N}} \otimes (A_\Delta^{\bar{N}-1} - \bar{A}^{\bar{N}-1}) \end{bmatrix}.
\end{align}
Note that for all $A_\Delta \in \mathcal{P}_{A_\Delta} \implies A_\Delta^n \in \mathcal{P}^n_{A_\Delta}$, for $n \in \{1,2,\dots, \bar{N}-1\}$, 
where $\mathcal{P}^n_{A_\Delta}$ is the set of all matrices that can be written as a convex combination of matrices obtained with the product of \emph{all possible combinations} of $n$ matrices out of $\{(\bar{A} + \Delta^{(1)}_A), (\bar{A}+\Delta^{(2)}_A), \dots, (\bar{A} + \Delta^{(n_a)}_A) \}$. 
Hence
\begin{align}\label{eq:bound_mainterm}
    & \max_{A_\Delta \in \mathcal{P}_{A_\Delta}} \Vert \mathbf{F}^x_i \mathbf{A}_\delta \Vert_* \leq 
    \max_{\substack{\Delta_1 \in \mathcal{P}_{A_\Delta} \\ \Delta_2 \in \mathcal{P}^2_{A_\Delta} \\ \vdots \\ \Delta_{\bar{N}-1} \in \mathcal{P}^{\bar{N}-1}_{A_\Delta} }} \Vert \mathbf{F}^x_i \bar{\mathbf{A}}_v \begin{bmatrix} I_N \otimes (\Delta_1-\bar{A}) \\ I_N \otimes (\Delta_2-\bar{A}^2) \\ \vdots \\ I_N \otimes (\Delta_{\bar{N}-1}-\bar{A}^{\bar{N}-1}) \end{bmatrix}\Vert_* = \mathbf{t}^i_0,
\end{align}
where we have relaxed all the equality constraints among the matrices $\{\Delta_1, \Delta_2,\dots, \Delta_{\bar{N}-1}\}$. Using the above bound \eqref{eq:bound_mainterm}, we get 
\begin{align}\label{eq:bound_firstterm}
    & \max_{\substack{A_\Delta \in \mathcal{P}_{A_\Delta} \\\Delta_A \in \mathcal{P}_A}} \Vert \mathbf{F}^x_i \mathbf{A}_\delta  \mathbf{\Delta}_A \Vert_* 
    \leq 
    \mathbf{t}^i_0  \max_{\Delta_A \in \mathcal{P}_A} \Vert \mathbf{\Delta}_A\Vert_p = \mathbf{t}^i_1,
\end{align}
where we have used the consistency property of induced norms, for any $p = 1,2,\infty$.
Similarly, bounding terms 
\begin{align}\label{eq:secondbound}
    \max_{\substack{A_\Delta \in \mathcal{P}_{A_\Delta}\\ \Delta_B \in \mathcal{P}_B}}\Vert \mathbf{F}^x_i \mathbf{A}_\delta \mathbf{\Delta}_B \Vert_* 
    \leq   \mathbf{t}^i_0  \max_{\Delta_B \in \mathcal{P}_B} \Vert \mathbf{\Delta}_B \Vert_p = \mathbf{t}^i_2,
\end{align}
and
\begin{align}\label{eq:boundthird}
    \max_{A_\Delta \in \mathcal{P}_{A_\Delta}} \Vert \mathbf{F}^x_i \mathbf{A}_\delta \bar{\mathbf{B}}  \Vert_* 
    \leq \mathbf{t}^i_0 \Vert \bar{\mathbf{B}} \Vert_p = \mathbf{t}^i_3,
\end{align}
and finally
\begin{align}\label{fourthbound}
& \max_{A_\Delta \in \mathcal{P}_{A_\Delta}} \Vert \mathbf{F}^x_i \bar{\mathbf{A}}_v \mathbf{A}_\Delta \Vert_* \leq \max_{\substack{\Delta_1 \in \mathcal{P}_{A_\Delta} \\ \Delta_2 \in \mathcal{P}^2_{A_\Delta} \\ \vdots \\ \Delta_{\bar{N}-1} \in \mathcal{P}^{\bar{N}-1}_{A_\Delta}}}\Vert \mathbf{F}^x_i \bar{\mathbf{A}}_v \begin{bmatrix} I_{\bar{N}} \otimes \Delta_1 \\ I_{\bar{N}} \otimes \Delta_2 \\ \vdots \\ I_{\bar{N}} \otimes \Delta_{\bar{N}-1} \end{bmatrix} \Vert_* = \mathbf{t}^i_w, 
\end{align}
for $i \in \{1,2,\dots, r(\bar{N}-1)+r_N\}$. Problems \eqref{eq:bound_mainterm}-\eqref{fourthbound} are maximizing convex functions of the decision variables over convex and compact domains. Therefore, these maximum bounds are attained at the extreme points, i.e., vertices of the convex sets $\{\mathcal{P}_{A_\Delta}, \mathcal{P}^2_{A_\Delta}, \dots, \mathcal{P}^{\bar{N}-1}_{A_\Delta}\}$, $\mathcal{P}_A$ and $\mathcal{P}_B$. Consequently, the optimal values of \eqref{eq:bound_mainterm}-\eqref{fourthbound} can be obtained by evaluating the values of each of the terms in \eqref{eq:bound_mainterm}-\eqref{fourthbound} at all possible combinations of such extreme points. Since such a vertex enumeration strategy scales poorly with the horizon length $N$, a computationally cheaper alternative to bounds \eqref{eq:bound_mainterm}-\eqref{fourthbound} is presented next. 
\subsection{Computationally Efficient Alternatives of Bounds \eqref{eq:bound_mainterm}-\eqref{fourthbound}}\label{bin_bound}
Recall the optimization problem from \eqref{eq:bound_mainterm}, given by \vspace{3pt}
\begin{align}\label{eq:term_bound}
    & \max_{A_\Delta \in \mathcal{P}_{A_\Delta}} \Vert \mathbf{F}^x_i \mathbf{A}_\delta \Vert_*,~\textnormal{with $\mathbf{A}_\delta$ from \eqref{adlta}}.
\end{align}
Using the triangle and H{\"o}lder's inequalities, and the submultiplicativity and consistency properties of induced norms, \eqref{eq:term_bound} can be upper bounded for any cut-off horizon $\tilde{N}<\bar{N}$ as follows:
\begin{align}\label{eq:bound_maintermD}
    & \max_{A_\Delta \in \mathcal{P}_{A_\Delta}} \Vert \mathbf{F}^x_i \mathbf{A}_\delta \Vert_* \leq \tilde{\mathbf{t}}^i_0 + \hat{\mathbf{t}}^i_0 = \mathbf{t}^i_0,
\end{align}
with
\begin{align*}
\tilde{\mathbf{t}}^i_0 = \max_{\substack{\Delta_1 \in \mathcal{P}_{A_\Delta} \\ \vdots \\ \Delta_{\tilde{N}-1} \in \mathcal{P}^{\tilde{N}-1}_{A_\Delta} }} \Vert \mathbf{F}^x_i \bar{\mathbf{A}}^{1:(\tilde{N}-1)}_v \begin{bmatrix} I_{\bar{N}} \otimes (\Delta_1-\bar{A}) \\ I_{\bar{N}} \otimes (\Delta_2-\bar{A}^2) \\ \vdots \\ I_{\bar{N}} \otimes (\Delta_{\tilde{N}-1}-\bar{A}^{\tilde{N}-1}) \end{bmatrix}\Vert_*
\end{align*}
where $\bar{\mathbf{A}}^{n_1:n_2}_v$ denotes $\begin{bmatrix} A^{(n_1)}_v & A^{(n_1 + 1)}_v & \dots & A^{(n_2)}_v \end{bmatrix}$, with the associated matrices defined in Appendix~\ref{A1}, $\mathbf{F}_i^x[n_1:n_2]$ denotes the $n_1$ to $n_2$ columns of the row vector $\mathbf{F}^x_i$, for $i \in \{1,2,\dots, r(\bar{N}-1)+r_N\}$, and  
\begin{align*}
    &\hat{\mathbf{t}}^i_0 = \max_{\Delta_A \in \mathcal{P}_A} \Bigg ( \sum_{j=\tilde{N}+1}^{\bar{N}} \Vert \mathbf{F}^x_i [(j-1)d +1 : jd] \Vert_* \big ( \sum_{k=1}^{j-\tilde{N}} ( \sum_{l = 1}^{j-k} {j-k \choose l} \Vert \bar{A} \Vert^{j-k-l}_p \Vert \Delta_A \Vert^l_p ) \big )  \Bigg ).
\end{align*}
Using the above derived bound \eqref{eq:bound_maintermD} we obtain: 
\begin{align*}
  & \max_{\substack{A_\Delta \in \mathcal{P}_{A_\Delta} \\\Delta_A \in \mathcal{P}_A}} \Vert \mathbf{F}^x_i \mathbf{A}_\delta \mathbf{\Delta}_A \Vert_*
    \leq \mathbf{t}^i_0  \max_{\Delta_A \in \mathcal{P}_{A}} \Vert \mathbf{\Delta}_A \Vert_p
  = \mathbf{t}^i_1,
\end{align*}
where we have used the consistency property of induced norms, for any $p = 1,2,\infty$.
Similarly, we bound 
\begin{align*}
    \max_{\substack{A_\Delta \in \mathcal{P}_{A_\Delta}\\ \Delta_B \in \mathcal{P}_B}}\Vert \mathbf{F}^x_i \mathbf{A}_\delta \mathbf{\Delta}_B \Vert_* 
    \leq   \mathbf{t}^i_0  \max_{\Delta_B \in \mathcal{P}_B} \Vert \mathbf{\Delta}_B \Vert_p = \mathbf{t}^i_2,
\end{align*}
and, 
\begin{align*} 
    & \max_{A_\Delta \in \mathcal{P}_{A_\Delta}} \Vert \mathbf{F}^x_i {\mathbf{A}}_\delta \bar{\mathbf{B}} \Vert_*
    \leq \mathbf{t}^i_0 \Vert \bar{\mathbf{B}} \Vert_p 
    = \mathbf{t}^i_3,
\end{align*}
and finally using $\mathbf{A}_\Delta$ from \eqref{stacked_matrix}
\begin{align*} 
    & \max_{A_\Delta \in \mathcal{P}_{A_\Delta}} \Vert \mathbf{F}^x_i \bar{\mathbf{A}}_v \mathbf{A}_\Delta \Vert_* \leq \tilde{\mathbf{t}}^i_w + \hat{\mathbf{t}}^i_w = {\mathbf{t}}^i_w,  
\end{align*}
for all $i \in \{1,2,\dots, r(\bar{N}-1)+r_N\}$, where 
\begin{align*}
    & \tilde{\mathbf{t}}^i_w =  \max_{\substack{\Delta_1 \in \mathcal{P}_{A_\Delta} \\ \vdots \\ \Delta_{\tilde{N}-1} \in \mathcal{P}^{\tilde{N}-1}_{A_\Delta} }} \Vert \mathbf{F}^x_i \bar{\mathbf{A}}^{1:(\tilde{N}-1)}_v \begin{bmatrix} I_{\bar{N}} \otimes \Delta_1 \\ I_{\bar{N}} \otimes \Delta_2 \\ \vdots \\ I_{\bar{N}} \otimes \Delta_{\tilde{N}-1} \end{bmatrix}\Vert_*,
    \end{align*}
and 
\begin{align*}
 & \hat{\mathbf{t}}^i_w = \max_{\Delta_A \in \mathcal{P}_{A}} \Bigg ( \sum_{j=\tilde{N}}^{\bar{N}-1} \Vert \mathbf{F}^x_i A^{(j)}_v \Vert_* \Big ( \Vert (I_{\bar{N}} \otimes \bar{A})^j \Vert_p + \sum_{k=1}^{j} {j \choose k} \Vert (I_{\bar{N}} \otimes \bar{A})  \Vert^{j-k}_p \Vert (I_{\bar{N}} \otimes \Delta_A)\Vert^{k}_p \Big ) \Bigg ), 
\end{align*}
where we have used the property of two matrices $X$ and $Y$ yielding:
\begin{align*}
 &\Vert (X+ Y)^n \Vert_p \leq  \Vert X^n \Vert_p + \sum_{k = 1}^{n} {n \choose k} \Vert X \Vert_p^{n-k} \Vert Y \Vert_p^{k},\\
 &\forall n \in \{\tilde{N},\tilde{N}+1, \dots, \bar{N}-1 \}.
\end{align*}
This cut-off horizon $\tilde{N}$ can be chosen based on the available computational resources at the expense of more  conservatism over \eqref{eq:bound_mainterm}-\eqref{fourthbound}.
\subsection{Obtaining \eqref{eq:state_robcon} from \eqref{eq:fin_state_con}}\label{Ap_bnd_der}
Here we derive \eqref{eq:state_robcon} from \eqref{eq:fin_state_con}. Using bounds \eqref{tdel1_etc} and \eqref{eq:bound_mainterm}-\eqref{fourthbound} and policy parametrization \eqref{eq:inputParam_DF_OL}, constraints \eqref{eq:fin_state_con} can be satisfied by imposing:
\begin{align}\label{eq:rob_statecon}
    &  \max_{\mathbf{w}_t \in \mathbf{W}} \Bigg ( \mathbf{F}^x_i(\bar{\mathbf{A}} \bar{\mathbf{x}}^{(N_t)}_t + \bar{\mathbf{B}} (\mathbf{M}^{(N_t)}_t \mathbf{w}_t + \bar{\mathbf{u}}^{(N_t)}_t) + (\bar{\mathbf{A}}_1 - \mathbf{I}_d)\bar{\mathbf{B}}\mathbf{M}^{(N_t)}_t \mathbf{w}_t + \mathbf{w}_t) + \mathbf{t}_{\delta 1}^{(N_t), i} \Vert \bar{\mathbf{x}}_t^{(N_t)} \Vert + \cdots \nonumber \\
    &~~~~~~~~~~~~~ + (\mathbf{t}_{2}^{(N_t), i} + \mathbf{t}_{\delta B}^{(N_t), i} ) \Vert \mathbf{M}^{(N_t)}_t \mathbf{w}_t + \bar{\mathbf{u}}^{(N_t)}_t \Vert + \mathbf{t}_{3}^{(N_t), i} \Vert \mathbf{M}^{(N_t)}_t \mathbf{w}_t \Vert  + \mathbf{t}_{w}^{(N_t), i} \mathbf{w}_\mathrm{max} \Bigg ) \leq \mathbf{f}^x_i,
\end{align}
where using \eqref{tdel1_etc} we have used the H{\"o}lder's and the triangle inequality to bound $\mathbf{F}^x_i \bar{\mathbf{A}}_1 \mathbf{\Delta}_A \bar{\mathbf{x}}^{(N_t)}_t$ and $\mathbf{F}^x_i \bar{\mathbf{A}}_1\mathbf{\Delta}_B (\mathbf{M}^{(N_t)}_t \mathbf{w}_t + \bar{\mathbf{u}}^{(N_t)}_t)$ for all rows $i \in \{1,2,\dots, r(N_t-1)+r_N\}$. 
Use the induced norm consistency property and the triangle inequality in \eqref{eq:rob_statecon} as:
\begin{equation}\label{xxxx}
\begin{aligned}
    & (\mathbf{t}_{2}^{(N_t), i} + \mathbf{t}_{\delta B}^{(N_t), i} ) \Vert \mathbf{M}^{(N_t)}_t \mathbf{w}_t + \bar{\mathbf{u}}^{(N_t)}_t \Vert +  \mathbf{t}^{(N_t)}_3 \Vert \mathbf{M}^{(N_t)}_t \mathbf{w}_t  \Vert,  \\
    & \leq (\mathbf{t}_{2}^{(N_t), i} + \mathbf{t}_{\delta B}^{(N_t), i} + \mathbf{t}_{3}^{(N_t), i} ) \Vert \mathbf{M}^{(N_t)}_t \Vert_p \mathbf{w}_\mathrm{max} + (\mathbf{t}_{2}^{(N_t), i} + \mathbf{t}_{\delta B}^{(N_t), i} ) \Vert \bar{\mathbf{u}}^{(N_t)}_t \Vert,\\
    & \leq \mathbf{t}_{\delta 3}^{(N_t), i} \Vert \mathbf{M}^{(N_t)}_t \Vert_p \mathbf{w}_\mathrm{max} + \mathbf{t}_{\delta 2}^{(N_t), i} \Vert \bar{\mathbf{u}}^{(N_t)}_t \Vert,
\end{aligned}
\end{equation}
for any $p = 1,2,\infty$, where we have used the definitions \eqref{tdel1_etc}. Using \eqref{xxxx} in \eqref{eq:rob_statecon} for all rows $i \in \{1,2,\dots, r(N_t-1)+r_N\}$, we define
\begin{align*}
\!\mathbf{f}^x_\mathrm{tight}& \!=\! \mathbf{f}^x\! -\! \mathbf{t}^{(N_t)}_{\delta 1} \Vert \bar{\mathbf{x}}_t^{(N_t)} \Vert \!-\! \mathbf{t}^{(N_t)}_{\delta 3} \Vert \mathbf{M}^{(N_t)}_t \Vert_p \mathbf{w}_\mathrm{max} - \mathbf{t}^{(N_t)}_{\delta 2} \Vert \bar{\mathbf{u}}^{(N_t)}_t \Vert - \mathbf{t}^{(N_t)}_w \mathbf{w}_\mathrm{max},
\end{align*}
which yields \eqref{eq:state_robcon} with tightened constraints \eqref{tight1}.
\subsection{Reformulation of \eqref{eq:MPC_R_fin_trac} via Duality of Convex Programs}\label{duality}
We again consider the following two cases for satisfying the robust state constraints \eqref{eq:state_robcon}.
\vspace{3pt}\\
\noindent \textbf{Case 1}: ($N_t \geq 2$, i.e., $ t \leq N-2$) 
Constraints \eqref{Ng1_con} can be satisfied using duality of convex programs by solving:
\begin{equation*}
\begin{aligned}
&     \mathbf{F}^x (\bar{\mathbf{A}} \bar{\mathbf{x}}^{(N_t)}_t + \bar{\mathbf{B}}\bar{\mathbf{u}}^{(N_t)}_t) + \Lambda^{(N_t)} \mathbf{h}^w \leq \mathbf{f}_\mathrm{tight}^x, \\[1ex]
& \Lambda^{(N_t)} \geq 0, \\
&   \Lambda^{(N_t)} \mathbf{H}^w = \Big (\mathbf{F}^x(\bar{\mathbf{B}}\mathbf{M}^{(N_t)}_t + (\bar{\mathbf{A}}_1 - \mathbf{I}_d) \bar{\mathbf{B}}\mathbf{M}^{(N_t)}_t + \mathbf{I}_d) \Big ), 
\end{aligned}
\end{equation*}
where $\mathbf{f}^x_\mathrm{tight}$ is obtained from \eqref{tight1}, and dual variables $\Lambda^{(N_t)}  \in \mathbb{R}^{(r(N_t-1)+r_N) \times aN_t}$.
\vspace{3pt}\\
\noindent \textbf{Case 2}: ($N_t = 1$, i.e., $ t \geq N-1$)
Consider the case of $N_t=1$. As pointed out in \eqref{N1_con}, the robust state constraint for this case can be simplified and written as
\begin{align*}
    \max_{\substack{{w}_t \in \mathbb{W}\\ \Delta_A \in \mathcal{P}_A \\ \Delta_B \in \mathcal{P}_B}}  H^x_N((\bar{{A}}+{\Delta}_A) \bar{\mathbf{x}}^{(1)}_t + (\bar{{B}} + {\Delta}_B) \bar{\mathbf{u}}^{(1)}_t + {w}_t) \leq h^x_N, 
\end{align*}
which we must solve \emph{exactly} (i.e., find $h^x_N$ where the max is attained) for the uncertainty representation ${w}_t \in \mathbb{W},~ \Delta_A \in \mathcal{P}_A$ and $\Delta_B \in \mathcal{P}_B$, in order for guarantees of Theorem~\ref{thm1} to hold. Using duality of convex programs \cite{ben2009robust} one can write the robust state constraints \eqref{N1_con} equivalently as: 
\begin{align}\label{n1dual}
&  H^x_N ((\bar{{A}} + \Delta^{(j)}_A) \bar{\mathbf{x}}^{(1)}_t +  (\bar{{B}} + \Delta^{(k)}_B) {\bar{\mathbf{u}}}^{(1)}_t) + \Lambda^{(1)} h^w\leq h^x_N, \nonumber \\
& \Lambda^{(1)} \geq 0,~H^x_N = \Lambda^{(1)}H^w, \nonumber \\
& \forall j \in \{1,2,\dots, n_a\},~\forall k \in \{1,2,\dots, n_b\},
\end{align}
where dual variables $\Lambda^{(1)} \in \mathbb{R}^{r_N \times a}$. \vspace{3pt} \\
\noindent \textbf{Input Constraints}: 
Considering the robust input constraints \eqref{eq:input_robcon} for any $N_t \in \{1,2,\dots,N\}$, one can similarly show that this is equivalent to:
\begin{equation*}
\begin{aligned}
  & (\gamma^{(N_t)})^\top \mathbf{h}^w \leq \mathbf{h}^u - \mathbf{H}^u \bar{\mathbf{u}}^{(N_t)}_t,\\
  & (\mathbf{H}^u\mathbf{M}^{(N_t)}_t)^\top
 = (\mathbf{H}^w)^\top \gamma^{(N_t)},~\gamma^{(N_t)} \geq 0,
 \end{aligned}
\end{equation*}
by introducing decision variables of $\gamma^{(N_t)} \in \mathbb{R}^{aN_t \times oN_t}$ in \eqref{eq:MPC_R_fin_trac} for each horizon length $N_t \in \{1,2,\dots,N\}$. 
\subsection{Proof of Theorem~\ref{thm1}}\label{ProofThe1}
By assumption, at time step $t=0$ problem \eqref{eq:MPC_R_fin_trac} with tightened constraints \eqref{tight1} is feasible, with a horizon length $N_t = N$. We then prove robust satisfaction of \eqref{eq:FTOCP_constr} at all time steps $t \geq 0$ with controller \eqref{eq:cl_control} in closed-loop, by considering the following two cases: \vspace{3pt}\\
\noindent \textbf{Case 1}: ($2 \leq N_t <N$, i.e., $0< t \leq N-2$)
As \eqref{eq:MPC_R_fin_trac} is feasible at time step $t=0$ for $N_t=N$ chosen as per \eqref{eq:n_t}, let the corresponding optimal policy sequence be 
\begin{align}\label{feas_tm1}
\{u^\star_{0|0}, u^\star_{1|0}(\cdot), \dots, u^\star_{N-1|0}(\cdot)\}.
\end{align}
For time steps $t \in \{1,2,\dots, N-2\}$ recall that we define the MPC policy in \eqref{eq:cl_control} as:
\begin{align}\label{sarahpol}
    u^\mathrm{MPC}_t(x_t) = \begin{cases} \bar{u}^\star_{t|t},~\textnormal{if \eqref{eq:MPC_R_fin_trac} is feasible,} \\ u^\star_{t|t_\mathrm{f}}(x_t),~\textnormal{otherwise}, \end{cases}
\end{align}
where $t_\mathrm{f} \in \{0,1,\dots, N-1\}$ is the latest time step when \eqref{eq:MPC_R_fin_trac} was feasible. Policy \eqref{sarahpol} satisfies \eqref{eq:FTOCP_constr} robustly for all $t \in \{1,2,\dots, N-2\}$, as it is a solution to the constrained robust optimal control problem \eqref{eq:MPC_R_fin_trac}. Moreover, from \eqref{feas_tm1}, we have that $t_\mathrm{f} = 0$ is a guaranteed certificate, in case \eqref{eq:MPC_R_fin_trac} continues to be infeasible for all $t \in \{1,2,\dots, N-2\}$.  \vspace{3pt} \\
\noindent \textbf{Case 2}: ($N_t = 1$, i.e., $t \geq N-1$)
Consider the time step $t=N-1$, where from \eqref{eq:n_t} the MPC horizon length $N_t = 1$. In this case we consider constraints \eqref{N1_con} given by:
\begin{align}\label{eq:state_conN1}
    \max_{\substack{{w}_t \in \mathbb{W}\\ \Delta_A \in \mathcal{P}_A \\ \Delta_B \in \mathcal{P}_B}}  H^x_N((\bar{{A}}+{\Delta}_A) \bar{\mathbf{x}}^{(1)}_t + ({\bar{B}} + {\Delta}_B) \bar{\mathbf{u}}^{(1)}_t + {w}_t) \leq h^x_N.
\end{align}
From \eqref{sarahpol} we know that at time step $t=N-1$, there exists a $t_\mathrm{f}$ such that control action $u^{\star}_{t|t_\mathrm{f}}(x_t)$ robustly steers the state $x_t$ to $\mathcal{X}_N$ in one time step. 

Now, at $t=N-1$, we solve \eqref{eq:state_conN1} \emph{exactly} (i.e., find $h^x_N$ where the max is attained) by using duality arguments in \eqref{n1dual}, without any uncertainty over-approximation. Therefore, the optimization problem \eqref{eq:MPC_R_fin_trac} with constraint \eqref{eq:state_conN1} is guaranteed to be feasible at $t=N-1$, with $u^\star_{N-1|t_\mathrm{f}}(\cdot)$ as a feasibility certificate. Let us denote the corresponding optimal policy from $t=N-1$ as: 
\begin{align}\label{lem_proof_pol}
u^\mathrm{MPC}_t(x_t) = \bar{u}^\star_{t|t}.
\end{align}
Let policy \eqref{lem_proof_pol} be applied to \eqref{eq:unc_system} in closed-loop, so that the system reaches the terminal set $\mathcal{X}_N$ at time step $t+1$. Consider solving \eqref{eq:state_conN1} at this step with a horizon length of $N_{t+1}=1$. As, constraint \eqref{eq:state_conN1} uses the same representation of the system uncertainty in satisfying \eqref{eq:FTOCP_constr}-\eqref{FTOCP_termC} robustly as done in \eqref{eq:term_set_DF}, we can infer that a candidate policy at time step $(t+1)$ is 
\begin{align}\label{eq:can_tp1}
    u_{t+1|t+1}(x_{t+1}) = Kx_{t+1},
\end{align}
which is a feasible solution to the robust optimization problem \eqref{eq:MPC_R_fin_trac} under constraint \eqref{eq:state_conN1}. Thus, \eqref{eq:MPC_R_fin_trac} is guaranteed to remain feasible at $(t+1)$ with $N_{t+1}=1$. This completes the proof. $\blacksquare$   

\subsection{Proof of Theorem~\ref{isstheorem}}\label{ProofISS}

First, we have from Case-2 in the proof of Theorem~\ref{thm1} that at time step $t=N-1$, the problem~\eqref{eq:MPC_R_fin_trac} is feasible with horizon $N_t=1$ and therefore $x_t \in \mathcal{X}_N$ for all $t \geq N$. 

Now, consider the case of $t \geq N$, i.e., $N_t = 1$. 
Since \eqref{eq:MPC_R_fin_trac} for $N_t=1$ can be reformulated into a parametric QP, $V_{t \rightarrow t+1}^{\mathrm{MPC}}(x_t, 0, 1)$ is continuous and piecewise quadratic in $\mathcal{X}_N$ with $V_{t \rightarrow t+1}^{\mathrm{MPC}}(0, 0, 1)= 0$ \cite{bemporad2002explicit}. Hence, under Assumptions~\ref{assump:stagecost}-\ref{assump: termcost}, using the standard proof of \cite[Theorem~23]{Goulart2006}, we conclude that the origin of closed-loop system \eqref{eq:cl_loop_system} is ISS according to Definition~\ref{iss_lyap}, and $V^\mathrm{MPC}_{t \rightarrow t+1}(x_t,0,1)$ is ISS Lyapunov function $\forall t \geq N$. $\blacksquare$

\end{document}